\definecolor{green2}{rgb}{0,0.62,0.17}
\definecolor{orange}{rgb}{1,0.5,0.0}
\definecolor{b}{rgb}{0,0,1}
\definecolor{r}{rgb}{1,0,0}
\newtheorem{definition}{Definition}
\newtheorem{lemma}{Lemma}
\newtheorem{corollary}{Corollary}
\newtheorem{theorem}{Theorem}
\newtheorem{remark}{Remark}
 \newcommand{\R}{\mathbb{R}}
  \newcommand{\zero}{\mathbf{0}}
  \newcommand{\dn}{\mathbf{d}}
\DeclareMathOperator{\rank}{\mathrm{rank}}
\begin{document}

	\title{Fixed-time Stabilization with a Prescribed Constant Settling Time by Static Feedback for Delay-Free and Input Delay Systems}

\author{Andrey Polyakov\footnote{Inria, University of Lille, FR-59000, France (andrey.polyakov@inria.fr)} \;and Miroslav Krstic\footnote{
University of California, San Diego, USA}}
\date{}








\maketitle

\section{Introduction}
The problem of regulation of a system to a desired set-point in a finite time can be solved using, for example, the methods of finite-time stabilization (see, e.g., \cite{EfimovPolyakov2021:Book} and references therein). Algorithms of finite-time regulation/stabilization for linear systems  are well-known since 1950 (see, for example,  \cite{Feldbaum1953:AiT}, \cite{LaSalle1958:PNAS}, \cite{Fuller1960:IFAC}, \cite{Korobov1979:DAN}, \cite{Haimo1986:SIAM_JCO}, \cite{BhatBernstein2005:MCSS}, \cite{Andrieu_etal2008:SIAM_JCO}).
The settling time to a set-point may be uniformly bounded for all initial conditions (see, e.g., \cite{Majda1975:IUMJ}, \cite{Balakrishnan2005:AMC}, \cite{Andrieu_etal2008:SIAM_JCO}). 
In \cite{Polyakov2012:TAC}, such a property of finite-time stable systems was called fixed-time stability. 

Both time-independent (static) feedbacks (see, e.g., \cite{Polyakov2012:TAC}, \cite{Zimenko_etal2020:TAC}) and time-dependent regulators (see, e.g., \cite{Song_etal2017:Aut}, \cite{Orlov2022:Aut}) are developed for fixed-time stabilization/regulation of linear plants. For controllable systems,  the settling time can be tuned arbitrary small. The latter immediately follows from the definition of controllability. For a control system topologically equivalent to the integrator chain, very simple schemes for tuning of the settling time are given, for example, in  \cite{Song_etal2017:Aut} and \cite{PolyakovKrstic2023:TAC}.
The time-dependent feedback \cite{Song_etal2017:Aut} is designed such that the closed-loop system  converges to the origin  \textit{exactly} at a desired (prescribed) time $T>0$ independently of the initial condition (away from the origin). This property is, obviously, more strong than simply a fixed-time stabilization in a prescribed time $T>0$. In the latter case, the system reaches the desired set-point no later than the time instant $t=T$. An assignment of the exact constant settling time for all initial condition may be useful for certain control problems \cite{Song_etal2017:Aut}, \cite{Shinar_etal2014:JFI},
where the reaching of the set-point before or after the given time is an undesirable event. 

The time-dependent (prescribed-time) regulators  are designed for various finite dimensional \cite{Song_etal2017:Aut}, \cite{Holloway2018:PhD}, \cite{Abel_etal2022:ACC} and infinite dimensional \cite{Espitia_etal2019:Aut}, \cite{Steeves_etal2020:EJC} systems. However, in general, their structure  is the same in all cases, namely, the control law has the form of a linear feedback with a {\it time-dependent gain tending to infinity} as the time tends to the prescribed time $T>0$. This definitely impacts the robustness properties of the closed-loop system despite that the closed-loop system satisfies the ISS\footnote{ISS=Input-to-State Stability\cite{Sontag1989:TAC}}-like estimates \cite{Song_etal2017:Aut} on the prescribed interval of time $[0,T)$.
In the delay-free case, the mentioned time-dependent controller rejects matched additive disturbances of unknown magnitude \cite{Song_etal2017:Aut}, but it is very sensitive with respect to measurement noises \cite{Orlov2022:Aut}. The main reason of such sensitivity is the time dependence  of the feedback gain which, independently of the stabilization error and the magnitude of the measurement noises, infinitely amplifies the impact of the noise to the closed-loop dynamics as time tends to the desired settling time $T$. To improve the robustness, a switching rule between time-dependent prescribed-time regulator and a static finite-time (sliding mode) stabilizer has been suggested in \cite{Orlov2022:Aut}. 

  The fixed-time stabilizer presented in \cite{PolyakovKrstic2023:TAC}  is a static (\textit{time-independent}) nonlinear feedback, which can be interpreted as a linear control with a \textit{state dependent feedback gain}. This gain tends to infinity  as the norm of the stabilization error tends to zero. Such a control system admits a simple scheme for tuning of the stabilization time, but it does not allow us to assign a prescribed constant settling time of the stabilization. The controller is known to be robust (in the ISS sense) with respect to a rather large class of perturbations \cite{Andrieu_etal2008:SIAM_JCO}, \cite{Zimenko_etal2020:TAC}, \cite{Polyakov_etal2023:Aut} including measurement noises. 
  Moreover, comparing with the time-dependent stabilizer, it is expected to be less sensitive with respect to measurement noises since the feedback  gain (amplifying the noise) 
  does not tend to infinity in this case (due to non-zero stabilization error). 

In this paper we design a global static feedback, which stabilizes the linear MIMO system  such that the settling time of the closed-loop system to zero  equals exactly to a prescribed time $T$ for all non-zero initial conditions. To the best of authors' knowledge, the static (time-independent) feedbacks  solving the mentioned problem have never been designed before, probably due to the following reason.  
The finite-time stability with a constant settling time is impossible for continuous autonomous ODE (Ordinary Differential Equation), since the settling-time function of any finite-time stable ODE is \textit{strictly} decreasing along non-zero trajectories of the system \cite{BhatBernstein2000:SIAM_JCO}. So, it cannot be a constant for all non-zero  initial conditions. Therefore, the considered problem of stabilization is infeasible by a conventional static nonlinear feedback. Inspired by \cite{Polyakov_etal2023:Aut}, to overcome this fundamental obstacle, we define the \textit{gain of a static (time-invariant) nonlinear feedback as a function of the initial state}. Formally, in this case, the closed-loop system becomes a Functional  Differential Equation (FDE), since its right-hand side depends on both current  and  previous (more precisely, initial) values of the state vector.  However, this is a very particular class of FDE, since for any fixed initial condition, the FDE becomes an ODE and can be analyzed in the conventional way. Our design is essentially-based on an extension of  homogeneity concept to such a class of FDEs.

Homogeneity is a dilation symmetry widely utilized \cite{Zubov1958:IVM},\cite{Rosier1992:SCL}, \cite{Hong2001:Aut}, \cite{BhatBernstein2005:MCSS}, \cite{Orlov2005:SIAM_JCO}, \cite{Levant2005:Aut}, \cite{Perruquetti_etal2008:TAC}, \cite{Andrieu_etal2008:SIAM_JCO}, \cite{Polyakov2020:Book} for finite-time stabilization and stability analysis, since  any asymptotically stable homogeneous system of negative degree is finite-time stable. This paper extends the homogeneity-based analysis to a particular class of FDE, which can be treated as autonomous ODEs with right-hand sides depended on the initial state.
In this case, the vector field (the right-hand side of the FDE) may be homogeneous with respect a scaling of both the actual and initial state vectors. We show that, under certain conditions, the asymptotically stable homogeneous FDE is  finite-time stable with a constant settling time function. This novel result extends the existing knowledge about convergence rates of homogeneous systems.

 The novel fixed-time controllers are designed for both delay-free and input delay LTI systems.  In \cite{Zekraoui_etal2023:Aut}, the fixed-time stabilizer (with non-constant settling time) for the integrator chain has been designed using transport PDE (Partial Differential Equation) as a model of the input delay and the back-stepping transformation \cite{Krstic2009:Book}, \cite{KarafyllisKrstic2017:Book}. However, a similar PDE-based analysis seems impossible for our fixed-time stabilizer due to its discontinuity. 
In the delay-free case, the analysis of the closed-loop dynamics is based on the Filippov's theory of discontinuous differential equations\cite{Filippov1988:Book}. Its analog for PDEs with discontinuous controllers is not yet well-developed, despite of some interesting recent contributions to this field \cite{Orlov2020:Book}. We extend the results obtained in the delay-free case to the input delay LTI system by means of the Artstein's transformation \cite{Artstein1982:TAC}, which allows both the stability and the robustness analysis of the closed-loop system to be realized easily.

The paper is organized as follows. First, the problem statement is presented. Next, the preliminaries about a particular class of homogeneous FDEs is presented. After that, a fixed-time  stabilizer with a prescribed constant settling time is designed for LTI system. Finally, the numerical simulation examples and conclusions are presented.

\textit{Notation}.

$\R$ is the field of reals; $\R^n_{\zero}=\R^{n}\backslash\{\zero\}$, where $\zero$ is the zero element of a vector space (e.g., $\zero \in \R^n$ means that $\zero$ is the zero vector); $\|\cdot\|$ is a norm in $\R^n$ (to be specified later);
a matrix norm for $A\in \R^{n\times n}$ is defined as $\|A\|=\sup_{x\neq \zero} \frac{\|Ax\|}{\|x\|}$;  $\lambda_{\min}(P)$ denote a minimal eigenvalue of a symmetric matrix $P=P^{\top}\in \R^{n\times n}$; $P\succ 0$ means that the symmetric matrix $P$ is positive definite; $C^1(\Omega_1,\Omega_2)$ denotes the set of continuously differential functions  $\Omega_1\subset \R^n\mapsto \Omega_2\subset \R^m$; $L^{\infty}(\R,\R^k)$ is the Lebesgue space of measurable uniformly essentially bounded functions $\R\mapsto \R^k$ with the norm defined by the essential supremum;  $W^{1,\infty}(\Gamma,\R^n)=\{\phi\in L^{\infty}(\Gamma,\R^n): \dot \phi \in L^{\infty}(\Gamma,\R^n)\}$ is the Sobolev space of absolutely continuous functions $\Gamma\subset \R \mapsto \R^n$; $\|q\|_{L^{\infty}_{(t_0,t)}}=\mathrm{ess}\sup_{\tau\in (t_0,t)}\|q(\tau)\|$ for $q\in  L^{\infty}(\R,\R^k)$; we write  $\stackrel{a.e.}{=}$ (resp, $\stackrel{a.e.}{\leq}$ or $\stackrel{a.e.}{\in}$) if an identity (resp., inequality or inclusion) holds almost everywhere.


\section{Problem statement}
Let us consider the system
\begin{equation}\label{eq:mainsystem}
	\dot x(t)=Ax(t)+Bu(t-\tau), \quad t>0, \quad x(0)=x_0\in \R^n,
\end{equation}
where $x(t)\in \R^n$ is the state variable, $u(t-\tau)\in \R^m$ is the control signal, $A\in \R^{n\times n}$ and $B\in \R^{n\times m}$ are known matrices, the time shift $\tau\geq 0$ models a delay of a transmission of the input signal to the plant. We restrict the class of admissible control signals $u\in L^{\infty}((-\tau,+\infty),\R^m)$, so the differential equation in \eqref{eq:mainsystem} is assumed to be fulfilled almost everywhere. 
Notice that $u$ has to be defined on the time interval $(-\tau,+\infty)$ to guarantee the well-possedness of the system \eqref{eq:mainsystem}. The whole state vector $x(t)$ is assumed to be available (measured or estimated) for the control peroposes.
The pair $\{A,B\}$ is assumed to be controllable.

First, we  study the delay-free case ($\tau=0$).  For a given constant $T>0$, we need to design a nonlinear feedback
\begin{equation}\label{eq:u_delay_free}
	u(t)=\tilde K(x(t),x_0)x(t), \quad \tilde K\in C^1(\R^n_{\zero} \times \R^n, \R^{m\times n})
\end{equation}
 such that the closed-loop system \eqref{eq:mainsystem}, \eqref{eq:u_delay_free}  is {\it fixed-time stable with a constant settling time $T>0$}. The latter means that the system is Lyapunov  stable and $x(t)=\zero$, $\forall t\geq T$, $\forall x_0\in \R^n$, but $x(t)\neq \zero$ for $t\in [0,T)$ if $x_0\neq \zero$. Therefore, any trajectory of the system initiated away from the origin will reach the origin exactly at the time instant $t=T$.  

The second goal of the paper is to study the robustness  (in the sense of Input-to-State Stability \cite{Sontag1989:TAC})  of the closed-loop system with respect to additive disturbances and measurement noises.

Finally, the third goal is to study  the above problems for the input delay system \eqref{eq:mainsystem} with $\tau>0$. In this case,
the control value generated at time $t$ affects the system in the future time instant $t+\tau$. Since the plant model \eqref{eq:mainsystem} is valid only for $t>0$, the control signal $u(t)$ can be generated  based on the state measurements (similarly to \eqref{eq:u_delay_free}) only for $t>0$, but for $t\in (-\tau,0)$ it has to be initialized as follows:
\begin{equation}\label{eq:initial_condition}
  u(\theta)=\phi(\theta)\quad \text{ for }\quad  \theta\in (-\tau,0), \quad \phi\in L^{\infty}((-\tau,0),\R^m),
\end{equation}
where, dependently of the control application, the initial control signal $\phi$ can be assumed to be uncertain or assigned as needed for reaching the control goal.
Due to the input delay, we restrict the desired settling time to $T+\tau>0$.
Below we show that the problem of the fixed-time stablization with a constant settling time  $T+\tau$ is feasible only if the control signal on $(-\tau,0)$ is initialized by the zero ($\phi=\zero$). Notice that, for an arbitrary selected or unknown initial function $\phi\neq \zero$, the fixed-time stabilization with the prescribed time $T+\tau$ remains possible. However, the constant value  of the settling time cannot be guaranteed anymore.  The settling time can be  just bounded by the prescribed constant $T+\tau$  in  this case.


\section{Preliminaries}\label{sec:pre}
\subsection{Stability notions}
Let us consider the system
\begin{equation}\label{eq:fde}
	\dot x(t)=f(x(t),x(t_0)), \quad t>t_0, \quad x(t_0)=x_0\in \R^n,
\end{equation}
where $x(t)$ is the system state and $f: \R^n\times \R^n\mapsto \R^n$ is  locally bounded.
On the one hand, the system \eqref{eq:fde} is well-posed, since it can be rewritten in the form of the conventional ODE
\begin{equation}
\left\{
\begin{array}{l}
	\dot x=f(x,r),\\
	\dot r=\zero,
\end{array}
\right. \quad t>t_0,  \quad x(t_0)=r(t_0)=x_0.
\end{equation}
For simplicity, we assume that  $f\in C(\R^n_{\zero} \times \R^n, \R^n)$.  However, the results presented in this section are also valid for discontinuous equations and inclusions studied in  \cite{Filippov1988:Book}. If $f\in C(\R^n_{\zero} \times \R^n, \R^n)$ so the latter system (as well as the system \eqref{eq:fde}) has  classical (possible non-unique) solutions $t\mapsto (x(t),r(t))$ on $\R^n_{\zero}\times \R^n$ and Filippov solutions on $\R^n_{\zero}\times \R^n$ provided that $f$ is locally bounded.

 On the other, the differential equation \eqref{eq:fde} is not a dynamical system in the sense that {\it its solutions do not satisfy the so-called semi-group property \cite{MironchenkoPrieur2020:SIAM}}. Namely,  if $x(t,t_0,x_0)$ with $t\geq t_0$ denotes a solution of the system \eqref{eq:fde} then, in the general case, $x(t,s,x(s,t_0,x_0))\neq x(t+s,t_0,x_0)$, where $s>t_0$.
So, the classical results of the stability theory (such as the Lyapunov function method) cannot be directly applied to the system \eqref{eq:fde}.
However, the stability notions can be introduced in the conventional way.
Since, in this paper, we deal only with a global uniform stability, then, for shortness, we omit the words "global uniform"  when we discuss stability issues.
\begin{definition}\cite{EfimovPolyakov2021:Book}\label{def:stab}
	The system \eqref{eq:fde} is said to be 
	\begin{itemize}
		\item Lyapunov stable if there exists $\varepsilon\in \mathcal{K}$ such that 
		\begin{equation}
		\|x(t)\|\leq \varepsilon(\|x_0\|), \quad \forall t\geq t_0, \quad \forall x_0\in \R^n;
		\end{equation}
		\item asymptotically stable if there exists $\beta\in \mathcal{KL}$
		\begin{equation}
		\|x(t)\|\leq \beta(\|x_0\|,t-t_0), \quad \forall x_0\in \R^n, \quad \forall t\geq t_0.
		\end{equation}
		\item finite-time stable if it is   Lyapunov stable and the exists a  locally bounded function $\tilde T: \R^n\mapsto \R_+$  such that
		for any $x_0\in \R^n\backslash\{\zero\}$  it holds
		\begin{equation}
		x(t)=\zero, \quad \forall t\geq t_0+\tilde T(x_0),\quad  
		\end{equation}
		for any solution $x(t)$ of \eqref{eq:fde} with  $x(t_0)=x_0$, but
		 $x(t)\neq \zero $ for all $t\in [0,\tilde T(x_0))$, at least, for one solution $x(t)$ of \eqref{eq:fde} with  $x(t_0)=x_0$;
		\item fixed-time stable if it is finite-time stable and there exists $T_{\max}>0$ such that
		\begin{equation}
		\exists T_{\max}>0\quad : \quad \tilde T(x_0)\leq T_{\max}, \quad \forall x_0\in \R^n;
		\end{equation}
	\end{itemize}
\end{definition}
The function $\tilde T$ from the latter definition is known as the \textit{settling time function} \cite{BhatBernstein2000:SIAM_JCO} and its value $\tilde T(x_0)$ is referred to as the \textit{settling time} for the given initial state $x_0$. 
In this paper, we study \textit{finite-time stable systems  with constant settling-time functions}, i.e., $\tilde T(x_0)=\mathrm{const}$ for all $x_0\neq \zero$.

Let us consider the system
\begin{equation}\label{eq:fde_q}
	\dot x(t)\stackrel{a.e.}{=}\tilde f(x(t),x_0,q(t)), \quad t>t_0, \quad x(t_0)=x_0\in \R^n,
\end{equation}
where $x(t)$ is the system state, $q\in L^{\infty}(\R,\R^k)$ and $\tilde f$ is a locally bounded measurable  function such that the system \eqref{eq:fde_q} has a strong (Carath\'eodory or Filippov) solution for any  $x_0\in \R^n$, any $t_0\in \R$ and any $q\in L^{\infty}(\R,\R^k)$. 

\begin{definition}\cite{Sontag1989:TAC}
	A system  is said to be Input-To-State Stable (ISS) if 
	there exist  $\beta\in \mathcal{KL}$ and $\gamma\in \mathcal{K}$ 
	such that 
	\begin{equation}\label{eq:ISS}
		\|x(t)\|\!\leq\! \beta(\|x_0\|,t-t_0)+\gamma(\|q\|_{L^{\infty}_{(t_0,t)}}), \, \forall x_0\!\in\! \R^n, \, \forall t\!\geq\!t_0, \, \forall q\!\in\! L^{\infty}(\R,\R^k), \, \forall t_0\!\in\! \R,
	\end{equation}
where $x$ is a state of the system at the time $t\geq t_0$, $x_0$ is the initial state and $q$ is an exogenous  input/perturbation. 
\end{definition}
In control theory, the ISS is frequently interpreted as a robustness of the system with respect to a perturbation, which is modelled by an exogenous input $q$ in the right-hand side. Since the right-hand side of \eqref{eq:fde_q} depends on the initial condition then the behavior of the perturbed system on the infinite horizon could also depend on the initial state $x_0$.   Therefore, \textit{in the case of the system \eqref{eq:fde_q}, the function $\gamma$ in the ISS definition may depend on $x_0$ as well}.
	
\subsection{Homogeneous Systems}
\subsection{Linear dilations}
Let us recall  that \textit{a family of  operators} $\dn(s):\R^n\mapsto \R^n$ with $s\in \R$ is  a \textit{group} if
$\dn(0)x\!=\!x$, $\dn(s) \dn(t) x\!=\!\dn(s+t)x$, $\forall x\!\in\!\R^n, \forall s,t\!\in\!\R$.
A \textit{group} $\dn$ is \\
a) \textit{continuous} if the mapping $s\mapsto \dn(s)x$ is continuous,  $\forall x\!\in\! \R^n$;\\
b) \textit{linear} if $\dn(s)$ is a linear mapping (i.e., $\dn(s)\in \R^{n\times n}$), $\forall s\in \R$;\\
c)  a \textit{dilation} in $\R^n$ if $\liminf\limits_{s\to +\infty}\|\dn(s)x\|=+\infty$ and $\limsup\limits_{s\to -\infty}\|\dn(s)x\|=0$,  $\forall x\neq \zero$.

Any linear continuous group in $\R^n$ admits the representation  \cite{Pazy1983:Book}
\begin{equation}\label{eq:Gd}
	\dn(s)=e^{sG_{\dn}}=\sum_{j=1}^{\infty}\tfrac{s^jG_{\dn}^j}{j!}, \quad s\in \R,
\end{equation}
where $G_{\dn}\in \R^{n\times n}$ is a generator of $\dn$. A continuous linear group \eqref{eq:Gd} is a dilation in $\R^n$ if and only if $G_{\dn}$ is an anti-Hurwitz matrix \cite{Polyakov2020:Book}. In this paper we deal only with continuous linear dilations. 
A \textit{dilation} $\dn$ in $\R^n$ is\\
i) \textit{monotone} if the function $s\mapsto \|\dn(s)x\|$ is strictly increasing,  $\forall x\neq \zero$;\\
ii) 	\textit{strictly monotone} if  $\exists \beta\!>\!0$ such that $\|\dn(s)x\|\!\leq\! e^{\beta s}\|x\|$, $\forall s\!\leq\! 0$, $\forall x\in \R^n$.\\
The following result is the straightforward consequence of the existence of the quadratic Lyapunov function  for asymptotically stable LTI systems.
\begin{corollary}\label{cor:monotonicity}
	A linear continuous dilation in $\R^n$ is strictly monotone with respect to the weighted Euclidean norm $\|x\|=\sqrt{x^{\top} Px}$ with $0\prec P\in \R^{n\times n}$ if and only if 
	\begin{equation}\label{eq:mon_d_P}
		PG_{\dn}+G_{\dn}^{\top}P\succ 0, \quad P\succ 0.
	\end{equation}
\end{corollary}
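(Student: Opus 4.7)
The plan is to reduce the norm-monotonicity condition to a differential inequality for the scalar function $W(s) := \|\dn(s)x\|^2 = x^{\top} \dn(s)^{\top} P \dn(s) x$, and to connect this inequality directly to the matrix inequality \eqref{eq:mon_d_P}. The key identity is obtained by differentiating $W$ using $\tfrac{d}{ds}\dn(s) = G_{\dn}\dn(s)$ (immediate from the series \eqref{eq:Gd}) and the product rule:
\[
W'(s) = x^{\top} \dn(s)^{\top} (PG_{\dn} + G_{\dn}^{\top} P) \dn(s) x = y^{\top} M y,
\]
where $y := \dn(s)x$ and $M := PG_{\dn} + G_{\dn}^{\top} P$. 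This identity is the bridge between the algebraic condition \eqref{eq:mon_d_P} and the geometric growth of the norm along the orbit $s \mapsto \dn(s)x$, and the same $M$ will appear in both directions of the equivalence.

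For sufficiency, assuming $P \succ 0$ and $M \succ 0$, I would set $\alpha := \lambda_{\min}(P^{-1/2} M P^{-1/2}) > 0$, so that $M \succeq \alpha P$. Then $W'(s) \geq \alpha W(s)$ for every $x \in \R^n$ and every $s \in \R$, hence $(e^{-\alpha s}W(s))' \geq 0$; integrating backwards from $0$ yields $W(s) \leq e^{\alpha s} W(0)$ for all $s \leq 0$. Taking square roots gives $\|\dn(s)x\| \leq e^{(\alpha/2) s}\|x\|$ for $s \leq 0$, which is strict monotonicity with rate $\beta = \alpha/2$.

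For necessity, suppose strict monotonicity holds with some $\beta > 0$, so $W(s) \leq e^{2\beta s} W(0)$ for all $s \leq 0$ and all $x \in \R^n$. Since $W$ is smooth in $s$, dividing the inequality $W(s) - W(0) \leq (e^{2\beta s} - 1)W(0)$ by $s < 0$ and letting $s \to 0^{-}$ yields $W'(0) \geq 2\beta W(0)$, i.e.,
\[
x^{\top} M x \geq 2\beta\, x^{\top} P x, \qquad \forall x \in \R^n.
\]
Since $\beta > 0$ and $P \succ 0$, this forces $M \succ 0$, which is exactly \eqref{eq:mon_d_P}.

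No serious technical obstacle is expected here. The only small subtlety is reading off the identity $W'(s) = y^{\top} M y$ cleanly from the product rule and \eqref{eq:Gd}, so that the very same matrix $M = PG_{\dn}+G_{\dn}^{\top}P$ controls both the decay rate in the sufficiency argument and the differential test in the necessity argument; once this is in place, both implications reduce to a one-line Gr\"onwall-type comparison.
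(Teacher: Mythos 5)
Your proof is correct, and it is essentially the argument the paper has in mind: the paper states this corollary without proof, remarking only that it is a ``straightforward consequence of the existence of the quadratic Lyapunov function for asymptotically stable LTI systems,'' and your computation of $W'(s)=y^{\top}(PG_{\dn}+G_{\dn}^{\top}P)y$ along the orbit $s\mapsto \dn(s)x$ is exactly that Lyapunov argument written out explicitly, with both directions handled cleanly.
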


\subsection{Canonical homogeneous norm}
Any linear  continuous and monotone  dilation in a normed vector space introduces also an alternative norm topology defined by the so-called canonical homogeneous norm \cite{Polyakov2020:Book}.
\begin{definition}[{\it Canonical homogeneous norm}]
	\label{def:hom_norm_Rn}
	Let a linear dilation $\dn$ in $\R^n$ be  continuous and monotone with respect to a norm $\|\cdot\|$.
	A function $\|\cdot\|_{\dn} : \R^n \mapsto [0,+\infty)$ defined as  follows: $\|\zero\|_{\dn}=0$ and 
	\begin{equation}\label{eq:hom_norm_Rn}
		\|x\|_{\dn}=e^{s_x}, \;  \text{where} \; s_x\in \R: \|\dn(-s_x)x\|=1, \quad x\neq \zero
		\vspace{-1mm}
	\end{equation}
	is said to be a canonical $\dn$-homogeneous norm \index{canonical homogeneous norm} in  $\R^n$ 
\end{definition}	 
Notice that, by construction, $\|x\|_{\dn}=1\; \Leftrightarrow\;  \|x\|=1$. Due to the monotonicity of the dilation, $\|x\|_{\dn}< 1 \; \Leftrightarrow\;  \|x\|<1$ and $\|x\|_{\dn}>1 \; \Leftrightarrow\;  \|x\|>1$.

For standard dilation $\dn_1(s)=e^{s}I_n$ we, obviously, have $\|x\|_{\dn_1}=\|x\|$. In other cases, $\|x\|_{\dn}$ with $x\neq \zero$ is implicitly defined by a nonlinear algebraic equation, which always have a unique solution due to monotonicity of the dilation. In some particular cases \cite{PolyakovKrstic2023:TAC}, this implicit equation has explicit solution even for non-standard dilations. 
\begin{lemma}\cite{Polyakov2020:Book}\label{lem:hom_norm}
	If  a linear continuous dilation $\dn$ in $\R^n$  is  monotone with respect to a norm $\|\cdot\|$
	then
	\begin{itemize}
		\item[1)]   $\|\cdot\|_{\dn} : \R^n\mapsto \R_+$ is single-valued and continuous on $\R^n$;
		\item[2)]
		there exist $\sigma_1,\sigma_2 
		\in \mathcal{K}_{\infty}$
		such that 
		\begin{equation}\label{eq:rel_norm_and_hom_norm_Rn}
			\sigma_1(\|x\|_{\dn})\leq \|x\|\leq \sigma_2(\|x\|_{\dn}), \quad \quad 
			\forall x\in \R^n;
		\end{equation}
		\item[3)]   $\|\cdot\|$ is locally Lipschitz continuous on $\R^{n}\backslash\{\zero\}$ provided that  the linear dilation $\dn$ is strictly monotone
		\item[4)] $\|\cdot\|_{\dn}$ is continuously differentiable on $\R^n\backslash\{\zero\}$ provided that $\|\cdot\|$ is continuously differentiable on $\R^n\backslash\{\zero\}$ and $\dn$ is strictly monotone.
	\end{itemize}
\end{lemma}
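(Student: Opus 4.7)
My overall strategy is to treat the exponent $s_x$ in $\|x\|_{\dn}=e^{s_x}$ as implicitly defined by the equation $F(s,x):=\|\dn(-s)x\|=1$, and to read off each of the four properties from the corresponding property of $F$ combined with the dilation's monotonicity. The nontrivial input is that $\|\cdot\|$ need not be smooth, so wherever a derivative appears it has to be handled by one-sided or Lipschitz-type estimates.

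For part (1), I would fix $x\neq\zero$ and note that $s\mapsto F(s,x)$ is continuous (since $\dn$ and $\|\cdot\|$ are) and strictly decreasing by monotonicity of $\dn$. The dilation property forces $F(s,x)\to 0$ as $s\to+\infty$ and $F(s,x)\to+\infty$ as $s\to-\infty$, so the intermediate value theorem yields a unique $s_x\in\R$ with $F(s_x,x)=1$. Continuity of $s_x$ on $\R^n_{\zero}$ follows by a standard sandwich argument: for any $\varepsilon>0$ the continuous functions $F(s_x\pm\varepsilon,\cdot)$ lie on opposite sides of $1$ at $x$, hence also in a neighborhood, which forces $|s_{x'}-s_x|<\varepsilon$ nearby. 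Continuity at $\zero$ drops out of the two-sided estimate in part (2).

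For part (2), the key tool is compactness of the unit sphere $S=\{y\in\R^n:\|y\|=1\}$. I would set
\begin{equation*}
\sigma_1(r)=\min_{y\in S}\|\dn(\ln r)y\|, \qquad \sigma_2(r)=\max_{y\in S}\|\dn(\ln r)y\|, \quad r>0,
\end{equation*}
and $\sigma_i(0)=0$. Joint continuity of $(s,y)\mapsto\|\dn(s)y\|$ and compactness of $S$ make both $\sigma_i$ continuous; monotonicity of $\dn$ makes them strictly increasing; a compactness-uniform version of the dilation property gives $\sigma_i(r)\to 0$ as $r\to 0^+$ and $\sigma_i(r)\to+\infty$ as $r\to+\infty$, so $\sigma_i\in\mathcal{K}_{\infty}$. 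Writing $x=\dn(s_x)y$ with $y=\dn(-s_x)x\in S$ yields $\sigma_1(\|x\|_{\dn})\leq\|x\|\leq\sigma_2(\|x\|_{\dn})$.

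For parts (3) and (4), I would apply the implicit function theorem to $G(s,x):=\|\dn(-s)x\|-1$ at any $x\neq\zero$. Strict monotonicity — equivalent to $\|\dn(s)y\|\leq e^{\beta s}\|y\|$ for $s\leq 0$ — can be rewritten as a two-sided local estimate $c_1(t-s)\leq \|\dn(t)x\|-\|\dn(s)x\|\leq c_2(t-s)$ for $t>s$, uniform in $x$ on compact subsets of $\R^n_{\zero}$, which is the nonsmooth substitute for $\partial_s G$ being uniformly bounded away from zero. Inverting this estimate with respect to the equation $G(s_x,x)=0$ shows that $x\mapsto s_x$ is locally Lipschitz on $\R^n_{\zero}$, hence so is $\|\cdot\|_{\dn}=e^{s_x}$, proving (3). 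For (4), a $C^1$ norm makes $G\in C^1(\R\times\R^n_{\zero})$ with $\partial_s G<0$, so the classical implicit function theorem directly delivers $s_x\in C^1$ and hence $\|\cdot\|_{\dn}\in C^1(\R^n_{\zero})$. The main obstacle is the nonsmooth part (3): converting the one-sided exponential bound of strict monotonicity into a genuine two-sided quantitative Lipschitz estimate on $s\mapsto\|\dn(s)x\|$, uniform in $x$ on compacts away from the origin, is the only step that is not a pure continuity-and-compactness exercise.
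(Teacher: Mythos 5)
Note first that the paper does not prove this lemma at all: it is imported verbatim from the cited monograph, so there is no in-paper argument to compare against. Your proposal is, as far as I can check, a correct self-contained proof and follows what is essentially the standard route for this result: existence and uniqueness of $s_x$ from strict decrease of $s\mapsto\|\dn(-s)x\|$ plus the intermediate value theorem, the sandwich bounds via the extrema of $\|\dn(\ln r)y\|$ over the unit sphere, and the implicit function theorem (in a Lipschitz version for part 3, the classical version for part 4). Two steps deserve to be written out rather than asserted. First, in part (2) the claims $\sigma_2(r)\to0$ as $r\to0^+$ and $\sigma_1(r)\to+\infty$ as $r\to+\infty$ need the limit properties of the dilation to hold \emph{uniformly} on the compact sphere $S$; this follows from monotonicity via a Dini-type argument (the functions $y\mapsto\|\dn(s)y\|$ decrease pointwise to $0$ as $s\to-\infty$ and are continuous, hence converge uniformly on $S$), but it is not automatic from the pointwise $\liminf/\limsup$ conditions in the definition of a dilation. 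Second, in part (3) the lower bound $c_1(t-s)\leq\|\dn(t)x\|-\|\dn(s)x\|$ does come out of strict monotonicity, via $\|\dn(t)x\|\geq e^{\beta(t-s)}\|\dn(s)x\|$ and hence a difference at least $\beta(t-s)\|\dn(s)x\|$, while the upper bound uses $\|\dn(t)x-\dn(s)x\|\leq\|e^{(t-s)G_{\dn}}-I_n\|\,\|\dn(s)x\|$ together with the global Lipschitz property of the norm; you correctly identify this as the only genuinely quantitative step, and it does close. (Minor point: item 3 of the lemma as printed says $\|\cdot\|$ is locally Lipschitz, which is an obvious typo for $\|\cdot\|_{\dn}$; you interpreted it correctly.)
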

For the $\dn$-homogeneous norm $\|x\|_{\dn}$ induced by the weighted Euclidean norm $\|x\|=\sqrt{x^{\top}Px}$ we have \cite{Polyakov2020:Book}
\begin{equation}\label{eq:hom_norm_der}
	\tfrac{\partial \|x\|_{\dn}}{\partial x}=\|x\|_{\dn}\tfrac{x^{\top}\dn^{\top}(-\ln \|x\|_{\dn})P\dn(-\ln \|x\|_{\dn})}{x^{\top}\dn^{\top}(-\ln \|x\|_{\dn})PG_{\dn}\dn(-\ln \|x\|_{\dn})x}.
\end{equation}

\subsubsection{Homogeneous Functions and Vector fields}
Below we study systems that are symmetric with respect to a linear dilation $\dn$. The  dilation symmetry introduced by the following definition is known as a generalized  homogeneity \cite{Zubov1958:IVM}, \cite{Kawski1991:ACDS}, \cite{Rosier1992:SCL}, \cite{BhatBernstein2005:MCSS}, \cite{Polyakov2020:Book}.
\begin{definition}\cite{Kawski1991:ACDS}\label{def:hom_vf}
	A vector field $f:\R^n \mapsto \R^n$ is  $\dn$-homogeneous of 
	degree $\mu\in \R$  if 
	\begin{equation}\label{eq:homogeneous_operator_Rn}
		g(\dn(s)x)=e^{\mu s}\dn(s)g(x), \quad  \quad  \forall s\in\R, \quad \forall x\in \R^n.
	\end{equation}
\end{definition}

The  homogeneity of a mapping is inherited by other mathematical objects induced by this mapping.
In particular, solutions of $\dn$-homogeneous system\footnote{A system is homogeneous  if its is governed by a $\dn$-homogeneous vector field}
\begin{equation}\label{eq:hom_system}
	\dot x=g(x),\quad  t>0, \quad x(0)=x_0\in \R^n
\end{equation}
are symmetric with respect to the dilation $\dn$ in the following sense \cite{Zubov1958:IVM}, \cite{Kawski1991:ACDS}, \cite{BhatBernstein2005:MCSS}
\begin{equation}
	x(t,\dn(s)x_0)=\dn(s)x(e^{\mu s}t,x_0),
\end{equation}  
where $x(\cdot,z)$ denotes a solution of  \eqref{eq:hom_system}  with $x(0)=z\in \R^n$ and 
$\mu\in \R$ is the homogeneity degree of $g$. The mentioned symmetry  
of solutions implies many useful properties of homogeneous system such as equivalence of local and global results. For example, local asymptotic (Lyapunov or finite-time stability) is equivalent to global asymptotic (resp., Lyapunov or finite-time) stability.

\subsubsection{Homogeneous FDE}
It is well known \cite{Zubov1964:Book}, \cite{Nakamura_etal2002:SICE}, \cite{BhatBernstein2005:MCSS} that an asymptotically stable system \eqref{eq:hom_system} is finite-time stable provided that $g$ is $\dn$-homogeneous of negative degree. 
The following theorem shows that a homogeneous FDE \eqref{eq:fde} may be fixed-time stable with a constant settling-time function.
\begin{theorem}\label{thm:FxT_PT_hom}
	Let $f\in C(\R^n_{\zero} \times \R^n, \R^n)$ be locally bounded and the system 
	\begin{equation}\label{eq:fde_r}
		\dot x=f(x,r), \quad t>0,\quad x(0)=x_0, \quad r\neq \zero
	\end{equation}
	be asymptotically stable for any $r\in \R^n_{\zero}$.
	Let $\dn_1, \dn_2$ be  linear dilations in $\R^n$ such that 
	\begin{itemize}
		\item for any $r\in \R^n$ the vector field 
  \begin{equation}x\mapsto f(x,r)
  \end{equation} is $\dn_1$-homogeneous of negative degree $\mu<0$;
		\item the vector field 
		\begin{equation}
		    \left(\begin{array}{c} x\\r\end{array}\right)\mapsto \left(\begin{array}{c}
			f(x,r)\\
			\zero \end{array}\right), \quad x,r\in \R^n
		\end{equation}
		is $\tilde \dn$-homogeneous of degree $0$, where
		\begin{equation}
		\tilde \dn(s)=\left(\begin{array}{cc} \dn_2(s) & \zero\\ \zero & \dn_2(s)\end{array}\right), \quad s\in \R.
		\end{equation}
	\end{itemize}
 If the system \eqref{eq:fde} is Lyapunov stable then it is 
	\begin{itemize}
     	\item finite-time stable with a discontinuous (at least at $\zero$) settling-time function $\tilde T:\R^n\mapsto [0,+\infty)$;    
		\item  fixed-time stable provided that $\tilde T$ is bounded on some compact set $S\subset \R^n_{\zero}$ such that $\bigcup\limits_{s\in \R} \dn_2(s) S=\R^{n}_{\zero}$, moreover, the settling time is a constant for all $x_0\neq \zero$ if and only if  $\tilde T$ is constant on $S$.
	\end{itemize}
\end{theorem}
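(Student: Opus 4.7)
The plan is to combine two different homogeneity structures: the parametric $\dn_1$-homogeneity of negative degree delivers finite-time convergence of every frozen ODE $\dot x=f(x,r)$, while the joint $\tilde\dn$-homogeneity of degree zero yields a $\dn_2$-invariance of the settling-time function, from which both the fixed-time bound and the constant-settling-time statement fall out. The assumed Lyapunov stability of the FDE is used, as usual, only to close the argument for finite-time stability from attractivity.

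\textbf{Finite-time stability.} Fix $x_0\neq\zero$ and consider the autonomous ODE $\dot x=f(x,x_0)$. Its right-hand side is continuous on $\R^n_{\zero}$, $\dn_1$-homogeneous of degree $\mu<0$, and asymptotically stable by hypothesis, so the classical result on asymptotically stable homogeneous ODEs of negative degree (Zubov, Bhat--Bernstein) gives finite-time stability. Its trajectory from $x_0$ therefore reaches $\zero$ in some finite time $\tilde T(x_0)$. Since the FDE \eqref{eq:fde} with initial state $x_0$ coincides by construction with this very ODE, the map $x_0\mapsto\tilde T(x_0)$ is the settling-time function of the FDE, and combined with the assumed Lyapunov stability this yields finite-time stability in the sense of Definition~\ref{def:stab}.

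\textbf{$\dn_2$-invariance of $\tilde T$ and discontinuity at the origin.} Writing the FDE as the autonomous ODE on $\R^{2n}$ with state $(x,r)$ and zero dynamics on $r$, the right-hand side is $\tilde\dn$-homogeneous of degree $0$, so the standard symmetry of solutions for homogeneous ODEs (with no time rescaling since the joint degree vanishes) reads
\[
\bigl(x(t;\dn_2(s)x_0),\,\dn_2(s)x_0\bigr)=\tilde\dn(s)\bigl(x(t;x_0),\,x_0\bigr),\qquad s\in\R,\;t\geq 0,
\]
where $x(\cdot;y_0)$ denotes the FDE trajectory from $y_0$. Projecting on the first coordinate gives $x(t;\dn_2(s)x_0)=\dn_2(s)x(t;x_0)$; since $\dn_2(s)$ is a linear bijection fixing $\zero$, both trajectories hit $\zero$ at the same instant, so $\tilde T(\dn_2(s)x_0)=\tilde T(x_0)$ for all $s\in\R$ and all $x_0\neq\zero$. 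Discontinuity of $\tilde T$ at $\zero$ is then immediate: for any fixed $x_0\neq\zero$ one has $\|\dn_2(s)x_0\|\to 0$ as $s\to-\infty$ by the defining property of a dilation, whereas $\tilde T(\dn_2(s)x_0)=\tilde T(x_0)>0=\tilde T(\zero)$.

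\textbf{Fixed-time stability and constancy.} The covering hypothesis $\bigcup_{s\in\R}\dn_2(s)S=\R^n_{\zero}$ states that every nonzero initial state lies on a $\dn_2$-orbit meeting $S$; together with the orbit-constancy of $\tilde T$ this yields $\sup_{x_0\in\R^n_{\zero}}\tilde T(x_0)=\sup_{y\in S}\tilde T(y)$. Hence boundedness (resp.\ constancy) of $\tilde T$ on $S$ is equivalent to fixed-time stability (resp.\ constancy of $\tilde T$ on all of $\R^n_{\zero}$), which is precisely the last two claims. The main obstacle I foresee is handling the possible non-uniqueness of solutions under the merely continuous $f$ (and a fortiori in the discontinuous Filippov setting the authors want to accommodate): the dilation-symmetry identity above must be established at the level of the entire solution set via $\tilde\dn$-invariance of the Filippov set-valued map, so that $\tilde T(x_0)$, which is defined by quantifying over \emph{all} solutions from $x_0$, transfers cleanly under $\dn_2$ regardless of which branch of solution is picked on either side.
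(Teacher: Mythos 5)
Your proposal is correct and follows essentially the same route as the paper: finite-time stability of each frozen ODE $\dot x=f(x,x_0)$ from negative-degree $\dn_1$-homogeneity, the solution symmetry $x(t;\dn_2(s)x_0)=\dn_2(s)x(t;x_0)$ from the degree-zero $\tilde\dn$-homogeneity of the extended system giving orbit-constancy of $\tilde T$, and the covering hypothesis to pass from $S$ to $\R^n_{\zero}$. Your justification of the discontinuity at $\zero$ (orbit-constancy plus $\|\dn_2(s)x_0\|\to 0$ as $s\to-\infty$) is in fact slightly more complete than the paper's one-line remark, and your closing caveat about transferring the symmetry to the full (possibly non-unique/Filippov) solution set is a legitimate point the paper glosses over.
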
 
\begin{proof}
	On the one hand, since for any fixed $r\neq \zero$ the system \eqref{eq:fde_r} is asymptotically stable then 
	$\dn$-homogeneity with negative degree $\mu<0$ implies its finite-time stability (see \cite{BhatBernstein2005:MCSS}, \cite{Orlov2005:SIAM_JCO}), i.e.,
there exists a settling-time function $T_r: \R^n\mapsto \R_+$. This means that $T_{x_0}(x_0)<+\infty$ for any $x_0\neq \zero$. Hence, the Lyapunov stability of the system \eqref{eq:fde} implies its  finite-time stability with  the settling-time function 
$\tilde T:\R^{n}\mapsto [0,+\infty)$ defined as follows
	\begin{equation}
	\tilde T(x_0)=T_{x_0}(x_0) \quad\text{ for } \quad  x_0\neq \zero
	\end{equation}
	and $\tilde T(\zero)=\zero$. 
	
	

	On the other hand, $\tilde \dn$-homogeneity of the system \eqref{eq:fde_r} implies the following dilation symmetry of solutions
	\begin{equation}
	x_{\dn_2(s)r}(t,\dn_2(s)x_0)=\dn_2(s)x_{r}(t,x_0), \quad t\geq 0
	\end{equation}
	where $x_{\tilde r}(t,z)$ denotes the solution of the system \eqref{eq:fde_r}   with the initial condition 
	$x(0)=z$ and the vector of parameters (in the right-hand side)  $\tilde r\in \R^n$.

	The latter implies that the simultaneous scaling of $x_0$ and $r$ by $\dn_2(s)$ in  \eqref{eq:fde_r} does not change the settling time of the corresponding solution:
	\begin{equation}
	\tilde  T(\dn_2(s)x_0)=T_{\dn_2(s)x_0}(\dn_2(s)x_0)=T_{x_0}(x_0)=\tilde T(x_0)
	\end{equation}
	Therefore, the settling time function of the system \eqref{eq:fde} has a constant value along any homogeneous curve $\Gamma_{\dn_2}(x_0)=\{x\in \R^n: x=\dn_2(s)x_0, s\in \R\}$.
		In this case, $\tilde T$ is always discontinuous at zero
		since $\tilde T(\zero)=0$ and $\tilde T(x_0)\neq 0$ for $x_0\neq 0$. 
	However, taking into account $\bigcup\limits_{s\in \R} \dn_2(s)S=\R^{n}_{\zero}$, a boundedness of $\tilde T$ on the compact $S$ implies the uniform boundedness of $\tilde T$ on $\R^n$, i.e., the system is fixed-time stable.
	Moreover, if $\tilde T(x_0)=$const for all $x_0\in S$ then, due to homogeneity,  $\tilde T(x_0)=$const for all $x_0\in \R^n_{\zero}$.
\end{proof}

Below we design a feedback law $u$ for the system \eqref{eq:mainsystem} with $\tau=0$ such that the closed-loop system satisfies the latter theorem for $S=\{x\in\R^n : \|x\|=1\}$ and $\dn_2(s)=e^sI_n$.

\section{Prescribed-time Stabilization  by Static Homogeneous Feedback}\label{sec:m}
\subsection{Initial State Dependent Homogeneous Feedback}
Inspired by \cite{PolyakovKrstic2023:TAC}, let us consider the system \eqref{eq:mainsystem} and  define the homogeneous feedback as follows
\begin{equation}\label{eq:PT_hom_control}
	u=K_0x+K\dn(-\ln T)\dn\left(-\ln \left\|\frac{x}{\|x_0\|}\right\|_{\dn}\right)x \quad \text{ for } \quad x_0\neq \zero,	
\end{equation}
where $K_0,K\in \R^{m\times n}$ are feedback gains to be defined, $\dn$ is a dilation $\R^n$, $T>0$ is a prescribed settling time of the system, $\|\cdot\|_{\dn}$ is a canonical homogeneous norm induced by a norm $\|\cdot\|$ in $\R^n$ to be defined below. For $x_0=\zero$ we assign $u=K_0x$.

The key difference between the feedback \eqref{eq:PT_hom_control} and the homogeneous controller studied in   \cite{Zimenko_etal2020:TAC} is the  dependence of the feedback gain of the initial state $x_0$ and the prescribed settling time $T>0$. As in   \cite{Zimenko_etal2020:TAC},  the linear term $K_0x$ is  selected such that the linear vector field $x\mapsto (A+BK_0)x$ is $\dn$-homogeneous of negative degree  $\mu=-1$. Moreover, $\dn$ is constructed such that, for any fixed $x_0$, the right-hand side of the closed-loop system \eqref{eq:mainsystem}, \eqref{eq:PT_hom_control} is $\dn$-homogeneous of degree $\mu=-1$. Moreover, it is standard homogeneous of degree $0$ with respect to simultaneous scaling of $x$ and $x_0$ by $\dn_2(s)=e^{s}I_n$. Therefore, in the view of Theorem \ref{thm:FxT_PT_hom}, the asymptotic stability of the closed-loop system implies its fixed-time stability.  Below we prove, by the direct Lyapunov method, that the settling time of the system is equal to the constant $T>0$ for any 
initial state $x(0)=x_0\neq \zero$. Since the feedback \eqref{eq:PT_hom_control} is discontinuous at $x=0$ (for $x_0\neq \zero$) then solutions of the closed-loop system \eqref{eq:mainsystem}, \eqref{eq:PT_hom_control} are defined in the sense of Filippov \cite{Filippov1988:Book}. 
\begin{lemma}[Well-posedness of delay-free control system]\label{lem:wp_delay_free} The feedback law \eqref{eq:PT_hom_control} is locally bounded and  for any $x\neq \zero$ it holds
	\begin{equation}
	u\to K_0x \quad \text{ as } \quad x_0\to \zero.
	\end{equation}
	For any $x_0\neq \zero$, the closed-loop system \eqref{eq:mainsystem}, \eqref{eq:PT_hom_control} has a global-in-time Filippov solution $x:\R_+ \mapsto \R^n$ being a unique classical solution as long as $x(t)\neq \zero$.
 For $x_0=\zero$, the closed-loop system has the unique zero solution.
\end{lemma}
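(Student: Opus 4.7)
The key identity driving the analysis is that, by the defining property \eqref{eq:hom_norm_Rn} of the canonical homogeneous norm, the rescaled vector $\dn(-\ln\|y\|_{\dn})y$ has unit $\dn$-homogeneous norm (hence unit standard norm, by monotonicity) for every $y\neq\zero$. Applied to $y=x/\|x_0\|$ with $x,x_0\in\R^n_{\zero}$, this yields
$$\left\|\dn\!\left(-\ln\left\|\tfrac{x}{\|x_0\|}\right\|_{\dn}\right)x\right\|=\|x_0\|,$$
so the nonlinear term in \eqref{eq:PT_hom_control} is bounded in norm by $\|K\dn(-\ln T)\|\cdot\|x_0\|$. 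This single estimate delivers local boundedness of $u$ on all of $\R^n\times\R^n$ (the only subtle point being $x=\zero$, $x_0\neq\zero$, where $u$ is genuinely discontinuous but stays bounded), and the limit $u\to K_0 x$ as $x_0\to\zero$ with $x\neq\zero$ fixed follows from the vanishing prefactor $\|x_0\|$.

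Next I would fix $x_0\neq\zero$ and view the closed loop as the ODE $\dot x=F(x):=Ax+B\tilde K(x,x_0)x$, with $F$ locally bounded on $\R^n$ and continuous on $\R^n_{\zero}$. Filippov's existence theorem \cite{Filippov1988:Book} then produces a Filippov solution on a maximal forward interval $[0,t_{\max})$. Upgrading the previous estimate gives the linear growth bound $\|F(x)\|\leq(\|A\|+\|B\|\|K_0\|)\|x\|+\|B\|\|K\dn(-\ln T)\|\|x_0\|$, which rules out finite-time blow-up and therefore forces $t_{\max}=+\infty$.

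For classical uniqueness as long as $x(t)\neq\zero$, I would appeal to Lemma \ref{lem:hom_norm}(4): with the base norm chosen $C^1$ on $\R^n_{\zero}$ and $\dn$ strictly monotone (as will be arranged through Corollary \ref{cor:monotonicity} in the design), $\|\cdot\|_{\dn}\in C^1(\R^n_{\zero})$, hence $\tilde K(\cdot,x_0)\in C^1(\R^n_{\zero},\R^{m\times n})$ and $F$ is $C^1$ on $\R^n_{\zero}$, so local Lipschitzness gives classical uniqueness while the trajectory stays away from the origin. The remaining case $x_0=\zero$ is immediate: by definition $u=K_0 x$, the closed loop reduces to the linear ODE $\dot x=(A+BK_0)x$ with $x(0)=\zero$, whose unique solution is $x\equiv\zero$.

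The main obstacle is precisely the discontinuity of the feedback at $x=\zero$ for $x_0\neq\zero$, which is why Filippov solutions, not classical ones, are needed beyond reaching the origin. The unit-norm identity above is the key ingredient that neutralises this singularity by bounding the feedback by $\|x_0\|$ alone, independently of the direction of $x$ as $x\to\zero$; everything else is a routine combination of Filippov existence, linear growth, and smoothness of $\|\cdot\|_{\dn}$ on $\R^n_{\zero}$.
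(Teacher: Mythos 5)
Your proposal is correct and follows essentially the same route as the paper's proof: the same key estimate $\|u-K_0x\|\leq\|K\dn(-\ln T)\|\cdot\|x_0\|$ obtained from $\|\dn(-\ln\|z\|_{\dn})z\|=1$, Filippov regularization at $x=\zero$ for existence, a linear growth bound for global continuation (the paper invokes Winter's theorem for this step), local Lipschitz continuity of the right-hand side on $\R^n_{\zero}$ for classical uniqueness away from the origin, and reduction to the linear ODE for $x_0=\zero$. The only cosmetic difference is that you invoke item 4 of Lemma \ref{lem:hom_norm} ($C^1$ regularity) where the paper uses item 3 (local Lipschitz continuity), which is immaterial.
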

\begin{proof} 
According to Filippov's method \cite{Filippov1988:Book}, to define a solution of the closed-loop system, the discontinuous feedback is regularized at $x=\zero$ as follows
\begin{equation}\label{eq:Filippov_regularization}
u\in 
\|x_0\|K\dn(-\ln T) \mathcal{B} \quad \text{ for } \quad x=\zero,
\end{equation}
where 
$\mathcal{B}=\{x\in \R^n: \|x\|\leq 1\}$ is the unit ball.
The right-hand side of the closed-loop system  \eqref{eq:mainsystem}, \eqref{eq:PT_hom_control}  becomes a differential inclusion with an upper semi-continuous right-hand side, which is single-valued at $x\neq \zero$ and set-valued (compact- and convex-valued) at $x=0$. In this case, the system has a Filippov solution (defined at least locally in time) for any $x_0\neq \zero$ (see, \cite{Filippov1988:Book} for more details).

Since, by definition of the canonical homogeneous norm, we have  $$\|\dn(-\ln \|z\|_{\dn})z\|=1$$ then 
	\begin{equation}
	\|u-K_0x\|_{\R^m}\leq \|K\dn(-\ln T)\|_{m}\cdot \|x_0\| 
	\end{equation}
	where $\|\cdot\|_{\R^m}$ is a norm in $\R^m$, $\|\cdot\|$ is a norm utilized for the definition of $\|\cdot\|_{\dn}$ and $\|K\dn(-\ln T)\|_m=\sup_{\|x\|=1} \|K\dn(-\ln T)x\|_{\R^m}$.
	 Moreover, since the canonical homogeneous norm $\|\cdot\|_{\dn}$ is locally Lipschitz continuous on $\R^{n}_{\zero}$ then the right-hand side of the closed-loop system
	\eqref{eq:mainsystem}, \eqref{eq:PT_hom_control} is locally Lipschitz continuous away from the origin ($x\neq \zero$). So, for any $x_0\neq \zero$, the closed-loop system has a classical solution $x(t), t\geq 0$  defined uniquely as long as $0<\|x(t)\|<+\infty$. 
	Since the right-hand side of the closed-loop system satisfies the estimate 
	\begin{equation}
	    \|Ax+Bu\|\leq \|A+BK_0\|\cdot \|x\|+\|BK\dn(-\ln T)\|\cdot \|x_0\|
	\end{equation}
	with respect to $\|x\|$ then, in the view of Winter's theorem (see, e.g., \cite{GeligLeonovYakubovich2004:Book}), the solution $x(t)$ is defined globally in time (i.e., for all $t\geq0$). For $x_0=\zero$, the (regularized) closed-loop system becomes linear $\dot x=(A+BK_0)x, t>0, x(0)=\zero$, so it has the unique zero solution.
\end{proof}

Notice that, in the view of the latter lemma,  stability of the zero solution should imply the uniqueness of all solutions of the closed-loop system.

\begin{theorem}[\it Homogeneous stabilization with constant settling time]\label{thm:PT_hom_control}\quad \quad\\
	For any controllable pair $\{A, B\}$ one holds
	\begin{itemize}
		\item[\textbf{1)}]  the linear algebraic equation
		\begin{equation}\label{eq:Y_0G_0}
			AG_0-G_0A+BY_0=A, \quad G_0B=\zero
		\end{equation}
		has a solution  $Y_0\!\in\! \R^{m\times n}$, $G_0\!\in\! \mathbb{R}^{n\times n}$  such that 
		\begin{itemize}
			\item the matrix 
			\begin{equation}\label{eq:gen_Gd}
				G_{\mathbf{d}}\!=\!I_n\!+\!\mu G_0
			\end{equation} is anti-Hurwitz for $\mu\!\leq 1/\tilde n$, where $\tilde n\!\in\! \mathbb{N}$ is a minimal number such that $\rank[B,AB,...,A^{\tilde n-1}B]=n$;
			\item the matrix $G_0-I_n$ is invertible and  the matrix 
			\begin{equation}
				A_0=A+BY_0(G_0-I_n)^{-1}
			\end{equation}
			satisfies	the identity 
			\begin{equation}\label{eq:hom_A0}
				A_0G_{\mathbf{d}}=(G_{\mathbf{d}}+\mu I_n)A_0 \quad \text{ and } \quad G_{\mathbf{d}}B=B;
			\end{equation} 		
		\end{itemize}   
		
		\item[\textbf{2)}]  the linear algebraic system  \vspace{-2mm}
		\begin{equation}\label{eq:LMI_Rn}
			A_0X\!+\!XA_0^{\top}\!\!+\!BY\!+\!Y^{\top}\!B^{\top}\!\!+\!G_{\dn}X\!+\!XG_{\dn}^{\top}\!=\!\zero, \quad  G_{\dn}X+XG_{\dn}^{\top}\!\succ\! 0, \quad X\!=\!X^{\top}\!\succ\! 0
		\end{equation}
		always has a solution $X\in \mathbb{R}^{n\times n}$, $Y\in \mathbb{R}^{m\times n}$; \vspace{1mm}
		\item[\textbf{3)}] 
		the closed-loop system \eqref{eq:mainsystem}, \eqref{eq:PT_hom_control} with $\tau=0$, with 
		\begin{equation}
			K_0=Y_0(G_0-I_n)^{-1}, \quad K=YX^{-1}
		\end{equation}
		with 	the dilation $\dn(s)=e^{s G_{\dn}}$ and with the canonical homogeneous norm induced by the norm 
		\begin{equation}
		\|x\|=\sqrt{x^{\top}\dn^{\top}(-\ln T)X^{-1}\dn(-\ln T)x}
		\end{equation}
		is fixed-time stable with  the constant settling time $T>0$;
  \item[\textbf{4)}] all solutions of the closed-loop system \eqref{eq:mainsystem}, \eqref{eq:PT_hom_control} are unique.  
	\end{itemize}
\end{theorem}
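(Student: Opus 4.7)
The theorem bundles four assertions: algebraic feasibility of the commutation data $(G_0,Y_0)$ with the announced spectral/homogeneity properties (1); feasibility of the LMI \eqref{eq:LMI_Rn} (2); fixed-time stabilization with exact constant settling time $T$ (3); and uniqueness of closed-loop solutions (4). I would handle them in this order, since (1) fixes the dilation used in (2), and (2) supplies the Lyapunov metric used in (3)--(4).

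For (1) I would pass $\{A,B\}$ into the Brunovsky controller canonical form, in which \eqref{eq:Y_0G_0} decouples across the $m$ controller blocks and can be solved block by block, producing a $G_0$ with spectrum $\{0,1,\dots,\tilde n-1\}$; consequently $G_{\dn}=I_n+\mu G_0$ is anti-Hurwitz for $\mu\leq 1/\tilde n$, $G_0-I_n$ is invertible, and the commutation identity \eqref{eq:hom_A0} emerges by substituting $K_0=Y_0(G_0-I_n)^{-1}$ into the first equation of \eqref{eq:Y_0G_0}, while $G_{\dn}B=B$ is immediate from $G_0B=\zero$. This is essentially the construction in \cite{Zimenko_etal2020:TAC} and \cite{Polyakov2020:Book}. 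For (2), substituting $K=YX^{-1}$ rewrites \eqref{eq:LMI_Rn} as $(A_0+BK)X+X(A_0+BK)^{\top}=-(G_{\dn}X+XG_{\dn}^{\top})$, a Lyapunov equation for $A_0+BK$ with positive-definite right-hand side; existence follows from controllability of $\{A_0,B\}$ (inherited from $\{A,B\}$), the $\dn$-homogeneity of $A_0$ of degree $-1$, and the standard homogeneous LMI design, together with Corollary \ref{cor:monotonicity} used to guarantee strict monotonicity of $\dn$ with respect to $\sqrt{z^{\top}X^{-1}z}$.

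Part (3) is the crux. The plan is to employ $V(x;x_0):=\|x/\|x_0\|\|_{\dn}$ as Lyapunov functional, which at $t=0$ equals $1$ because $x_0/\|x_0\|$ has unit $\|\cdot\|$-norm and the definition of the canonical homogeneous norm forces $\|\cdot\|=1\Leftrightarrow\|\cdot\|_{\dn}=1$. Computing $\dot V$ via \eqref{eq:hom_norm_der} and exploiting (i) $G_{\dn}B=B$, which yields $\dn(s)B=e^s B$, and (ii) the commutation $A_0\dn(s)=e^{\mu s}\dn(s)A_0$ from \eqref{eq:hom_A0}, I would reduce the derivative, after setting $\tilde x:=\dn(-\ln V)(x/\|x_0\|)$ and conjugating by $\dn(-\ln T)$, to
\begin{equation*}
\dot V(x(t))\stackrel{a.e.}{=}\frac{\tilde x^{\top}P\bigl[A_0+BK\dn(-\ln T)\bigr]\tilde x}{\tilde x^{\top}PG_{\dn}\tilde x},\qquad P:=\dn^{\top}(-\ln T)X^{-1}\dn(-\ln T).
\end{equation*}
The LMI \eqref{eq:LMI_Rn}, after the same conjugation by $\dn(-\ln T)$ (and using $A_0\dn(-\ln T)=T\dn(-\ln T)A_0$ and $\dn(\ln T)B=TB$), yields the matrix identity $P(A_0+BK\dn(-\ln T))+(A_0+BK\dn(-\ln T))^{\top}P=-(PG_{\dn}+G_{\dn}^{\top}P)/T$, so the numerator above equals $-\tilde x^{\top}PG_{\dn}\tilde x/T$ and therefore $\dot V\equiv -1/T$. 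Integration gives $V(t)=1-t/T$, whence $x(T)=\zero$ for every $x_0\neq\zero$. Lyapunov stability of the origin and the structural hypotheses of Theorem \ref{thm:FxT_PT_hom} (negative-degree $\dn$-homogeneity of the closed-loop field in $x$; standard homogeneity of the augmented field in $(x,x_0)$ of degree $0$ under $\tilde\dn(s)=e^s I_{2n}$) are immediate from \eqref{eq:hom_A0} and the scalar homogeneity of $\|\cdot\|$ in both arguments.

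For (4), Lemma \ref{lem:wp_delay_free} provides local Lipschitz continuity away from the origin, so every Filippov solution is the unique classical solution on any interval on which $x(t)\neq \zero$; part (3) then forces the solution to reach the origin at $t=T$, and the identity $\dot V\stackrel{a.e.}{=}-1/T$ together with $V\geq 0$ precludes any departure from $\zero$ afterwards, since any Filippov solution of the regularized inclusion \eqref{eq:Filippov_regularization} starting from $x(T)=\zero$ would otherwise violate $V\geq 0$. The case $x_0=\zero$ is trivially unique. The main obstacle of the proof will be step (3): establishing that the rate is the \emph{exact constant} $-1/T$, which hinges on the non-trivial cancellation produced by combining the initial-state-dependent scaling $\dn(-\ln T)\dn(-\ln V)$ inside the feedback with the conjugated LMI identity above, rather than on a mere sign-definite Lyapunov bound.
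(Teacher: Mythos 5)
Your proposal is correct and follows essentially the same route as the paper: parts 1)--2) are delegated to the known construction of \cite{Zimenko_etal2020:TAC} (which you merely sketch in more detail via the canonical form), and part 3) is the same direct Lyapunov computation showing $\tfrac{d}{dt}\|x/\|x_0\|\|_{\dn}\stackrel{a.e.}{=}-1/T$ from \eqref{eq:hom_norm_der}, the commutation identities implied by \eqref{eq:hom_A0}, and the conjugated form of \eqref{eq:LMI_Rn}, with uniqueness obtained exactly as in the paper from Lemma \ref{lem:wp_delay_free} plus the fact that $V\geq 0$ forbids leaving the origin. The only cosmetic difference is your appeal to the hypotheses of Theorem \ref{thm:FxT_PT_hom}, which the paper's proof does not need since the explicit identity $V(t)=1-t/T$ already yields the constant settling time.
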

\begin{proof}
	The claims 1) and 2) are proven in\cite{Zimenko_etal2020:TAC}.
	For any constant $r>0$, using the formula \eqref{eq:hom_norm_der} we derive
	\begin{equation}
	\tfrac{d}{dt}\|x/r\|_{\dn}=\|x/r\|_{\dn}\tfrac{(x/r)^{\top}\dn^{\top}(-\ln \|x/r\|_{\dn})\dn^{\top}(-\ln T)X^{-1}\dn(-\ln T)\dn(-\ln \|x/r\|_{\dn})(\dot x/r)}{(x/r)^{\top}\dn^{\top}(-\ln \|x/r\|_{\dn})\dn^{\top}(-\ln T)X^{-1}G_{\dn}\dn(-\ln T)\dn(-\ln \|x/r\|_{\dn})(x/r)}.
	\end{equation}
	The identity \eqref{eq:hom_A0} implies that
	\begin{equation}
	A_0\dn(s)=e^{s}\dn(s)A_0 \quad \text{ and } \quad \dn(s)B=e^{s}B, \quad \forall s\in \R
	\end{equation}
	Hence, for the closed-loop system \eqref{eq:mainsystem}, \eqref{eq:PT_hom_control} with $x_0\neq \zero$ we have
	\begin{equation}
	\tfrac{d}{dt}\left\|\tfrac{x}{\|x_0\|}\right\|_{\dn}\!\!=\!\tfrac{1}{T}\tfrac{x^{\top}\!\dn^{\top}\!(-\ln \|x/\|x_0\|\|_{\dn})\dn^{\top}\!(-\ln T)X^{-1}(A_0+BK)\dn(-\ln T)\dn(-\ln \|x/\|x_0\|\|_{\dn})x}{x^{\top}\!\dn^{\top}\!(-\ln \|x/\|x_0\|\|_{\dn})\dn^{\top}\!(-\ln T)X^{-1}G_{\dn}\dn(-\ln T)\dn(-\ln \|x/\|x_0\|\|_{\dn})x}.
	\end{equation}
	Using \eqref{eq:LMI_Rn} we derive
	\begin{equation}
	\tfrac{d}{dt}\left\|\frac{x}{\|x_0\|}\right\|_{\dn}=-\frac{1}{T}
	\end{equation}
	as long as $x(t)\neq -1$. For $t=0$ and $x(0)=x_0\neq \zero$ we have $\|x(0)/\|x_0\|\|_{\dn}=1$.
	Since the derivative of the function $t\mapsto \|x(t)/\|x_0\|\|_{\dn}$ is negative  (for $x(t)\neq \zero$) then 
	$\|x(t)/\|x_0\|\|_{\dn}\leq 1$ for all $t\geq 0$. The latter is equivalent to $\|x(t)\|\leq \|x_0\|$, so the closed-loop system is Lyapunov stable. Moreover,  the system is fied time stable such that $x(t)=\zero$ for $t\geq T$ and $x(t)\neq \zero$ for all $t\in [0, T)$ if $x_0\neq \zero$. Finally, by Lemma \ref{lem:wp_delay_free} any solution of the system unique as long as $x(t)\neq \zero0$, but the proven fixed-time stability guarantee the uniqueness of the solution after the reaching of the origin.  The proof is complete.
\end{proof}


\subsection{Robust stabilization of the delay-free LTI system}

If $x_0$ in \eqref{eq:PT_hom_control} is replaced with a non-zero constant vector, then  the corresponding closed-loop system \eqref{eq:mainsystem} is homogeneous and ISS with respect to measurement noises
(in the view of results \cite{Ryan1995:SCL}, \cite{Andrieu_etal2008:SIAM_JCO}).
So, the obtained static feedback has some ISS properties with respect to measurement noises, but the robustness (namely, the asymptotic gain $\gamma$ in Definition 2) depends essentially  on the initial state $x_0$. Indeed, considering the term 
\begin{equation}
\tilde K(x,x_0)=K_0+K\dn(-\ln T)\dn\left(-\ln \left\|\frac{x}{\|x_0\|}\right\|_{\dn}\right)
\end{equation}
as a state-dependent gain of the  feedback $u=K(x,x_0)x$ we conclude that $\tilde K\to K_0$ as
$x_0\to \zero$. Since the matrix $A+BK_0$ is nilpotent then the stability margin\footnote{The stability margin of a Hurwitz matrix $M\in \R^{n\times n}$ is the maximum of real parts of eigenvalues of $M$.} of the matrix $A+B\tilde K(x,x_0)$  tends to zero as  $x_0\to \zero$. This badly impacts the robustness of the system. Indeed, for $x_0=\zero$ an  additive perturbation of arbitrary small magnitude may invoke unstable motion of the closed-loop system.
To eliminate this drawback, we modify the  feedback law \eqref{eq:PT_hom_control}   as follows
\begin{equation}\label{eq:PT_hom_control_robust}
	u_{\rm ct}(x,x_0)=K_0x+K\dn(-\ln T)\dn\left(-\ln \min\left\{ 1, \left\|\tfrac{x}{\|x_0\|}\right\|_{\dn}\right\}\right)x \quad \text{ for }\quad  x_0\neq \zero	
\end{equation} 
with 
\begin{equation}
    u_{ct}(x,\zero)=u_{\rm lin}(x):=(K_0+K\dn(-\ln T))x \quad \text{ for } \quad x_0=\zero.
\end{equation} Obviously, for $x\neq \zero$, it holds
\begin{equation}
	u_{\rm ct}(x,x_0)\to u_{\rm lin}(x) \quad \text{ as } \quad x_0\to \zero.
\end{equation}
Since $u_{\rm ct}(x,x_0)=u_{\rm lin}(x)$ for $\|x\|\geq \|x_0\|$, the latter limit is uniform on any compact from $ \R^{n}_{\zero}$ and $u_{\rm ct}\in C(\R^n_{\zero}\times  \R^n,\R^m)$.

Such a modification of the  controller allows us to improve the robustness  of the feedback law \eqref{eq:PT_hom_control}
preserving all stability properties of the closed-loop system. 
\begin{theorem}\label{thm:PT_hom_control_robust}
	Let the parameters of the control \eqref{eq:PT_hom_control} be defined as in Theorem \ref{thm:PT_hom_control}. Then the closed-loop system \eqref{eq:mainsystem}, \eqref{eq:PT_hom_control_robust} is fixed-time stable with the constant settling time $T>0$ and it has unique solutions for all $x_0\in \R^n$. All solutions of the perturbed closed-loop system \eqref{eq:mainsystem}, \eqref{eq:PT_hom_control_robust} 
	\begin{equation}\label{eq:system_q}
		\dot x\!\stackrel{a.e.}{=}\!Ax+Bu_{\rm ct}(x+q_1,x_0)+q_2,\quad t\!>\!t_0, \quad x(t_0)\!=\!x_0,  \quad q\!=\!(q_1,q_2)\!\in\! L^{\infty}(\R,\R^{2k}),
	\end{equation}	
	admit the ISS-like estimate
	\begin{equation}\label{eq:ISS_like_estimate}
		\|x(t)\|\leq \|x_0\|\cdot \beta(1,t-t_0)+\|x_0\|\gamma_1\left(\tfrac{\left\| q_1\right\|_{L^{\infty}_{(t_0,t)}}}{\|x_0\|}\right) +
		\|x_0\|\gamma_2\left(\tfrac{\left\| q_2\right\|_{L^{\infty}_{(t_0,t)}}}{\|x_0\|}\right) , \quad  x_0\neq \zero 
	\end{equation}
	where  $\beta\in \mathcal{KL}$ and $\gamma_1,\gamma_2\in\mathcal{K}$ are  independent of $x_0$, but dependent on $T>0$. Moreover 
	\begin{itemize}
		\item there exists $C_i>0$ (dependent of $T>0$) such that 
		\begin{equation}\label{eq:limit_gains}
			\|x_0\|\gamma_i\left(\tfrac{\left\| q_i\right\|_{L^{\infty}_{(t_0,t)}}}{\|x_0\|}\right)\to C_i\left\| q_i\right\|_{L^{\infty}_{(t_0,t)}} \quad \text{ as } \quad \|x_0\|\to 0,
		\end{equation}
		where $i=1,2$;
		\item 	the system  \eqref{eq:system_q} is fixed-time stable  with the settling time estimate $T_{\max}=\frac{\rho T}{\rho-1}$ for $\rho>1$ provided that  $q_1=\zero$, $q_2=B\gamma$, 
		\begin{equation}\label{eq:pt_gamma}
			\|B\gamma(t)\|\leq \|x_0\| \frac{\lambda_{\min}(X^{-1/2}G_{\dn}X^{1/2}+X^{1/2}G_{\dn}^{\top}X^{-1/2})}{2\rho T}, \quad \forall t\in \R.
		\end{equation}
	\end{itemize}
\end{theorem}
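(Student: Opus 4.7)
The plan is to establish each assertion of Theorem \ref{thm:PT_hom_control_robust} by leveraging the Lyapunov calculation from the proof of Theorem \ref{thm:PT_hom_control}, together with the scaling symmetry of the modified feedback \eqref{eq:PT_hom_control_robust}. First, I would observe that the two feedbacks \eqref{eq:PT_hom_control} and \eqref{eq:PT_hom_control_robust} coincide on the set where $\|x/\|x_0\|\|_\dn \leq 1$, while outside that set $u_{\rm ct}$ reduces to the linear $u_{\rm lin}$. In the nominal closed loop, the proof of Theorem \ref{thm:PT_hom_control} shows that $\|x(t)/\|x_0\|\|_\dn = 1-(t-t_0)/T$ as long as $x(t)\neq \zero$, so trajectories stay in the region where the two feedbacks are identical; the fixed-time stability with exact constant settling time $T$ and the uniqueness claim 4 of Theorem \ref{thm:PT_hom_control} therefore carry over, the remaining $x_0=\zero$ case being the linear system $\dot x = (A + B(K_0 + K\dn(-\ln T)))x$.

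Second, I would prove the ISS-like bound \eqref{eq:ISS_like_estimate} by a scaling reduction to the unit sphere. The key structural property is the identity $u_{\rm ct}(e^s x, e^s x_0) = e^s u_{\rm ct}(x,x_0)$, which holds because $\|e^s x/\|e^s x_0\|\|_\dn = \|x/\|x_0\|\|_\dn$ and the $\min\{1,\cdot\}$ truncation is invariant under this scaling. Setting $y=x/\|x_0\|$, $\tilde q_i = q_i/\|x_0\|$ and $\hat x_0 = x_0/\|x_0\|$ converts \eqref{eq:system_q} into $\dot y \stackrel{a.e.}{=} A y + B u_{\rm ct}(y + \tilde q_1, \hat x_0) + \tilde q_2$ with $\|\hat x_0\| = \|y(t_0)\|=1$. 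Using the Lyapunov function $V(y) = \|y\|_\dn$, whose nominal derivative equals $-1/T$, and expanding the perturbation of \eqref{eq:hom_norm_der} in $\tilde q_1,\tilde q_2$, I would obtain a uniform bound (over the compact set $\{\hat x_0 : \|\hat x_0\|=1\}$) of the form $\dot V \leq -1/T + \mathcal{O}(\|\tilde q_1\|_{L^\infty}+\|\tilde q_2\|_{L^\infty})$ outside a neighborhood of the origin whose radius is controlled by the $L^\infty$ norms of $\tilde q_i$. This delivers uniform $\beta\in\mathcal{KL}$ and $\gamma_i\in\mathcal{K}$, and undoing the scaling reproduces \eqref{eq:ISS_like_estimate}. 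For the linear limit \eqref{eq:limit_gains}, I would observe that for fixed bounded $q_i$ and $\|x_0\|\to 0$ the condition $\|x/\|x_0\|\|_\dn\geq 1$ is active along any trajectory with $\|x\|$ bounded below, so $u_{\rm ct}(x+q_1,x_0) = u_{\rm lin}(x+q_1)$; the closed loop then reduces to the LTI system $\dot x = (A+B(K_0+K\dn(-\ln T)))x + BK\dn(-\ln T)q_1 + q_2$, whose standard linear ISS gains serve as the constants $C_i$.

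Third, for the matched-perturbation settling-time bound, I would repeat the derivative computation in the proof of Theorem \ref{thm:PT_hom_control} with the additional term $B\gamma$ in the right-hand side. Using $\dn(s)B=e^s B$, the extra contribution to $\tfrac{d}{dt}\|x/\|x_0\|\|_\dn$ is a single inner product whose denominator is lower bounded via \eqref{eq:mon_d_P} by a positive multiple of $\lambda_{\min}(X^{-1/2}G_\dn X^{1/2}+X^{1/2}G_\dn^\top X^{-1/2})\|x_0\|^2$, while the identity $\|X^{-1/2}B\gamma\|=T\|B\gamma\|$ together with the hypothesis \eqref{eq:pt_gamma} makes the numerator at most $\|x_0\|/\rho$ in the appropriate normalization. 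This yields $\tfrac{d}{dt}\|x/\|x_0\|\|_\dn \leq -1/T+1/(\rho T) = -(\rho-1)/(\rho T)$, which integrates from $\|x(t_0)/\|x_0\|\|_\dn = 1$ to give the claimed settling-time estimate $T_{\max}=\rho T/(\rho-1)$.

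The hard part will be the uniform ISS estimate in the second step: showing that the gains $\gamma_i$ in the unit-sphere problem can be chosen independent of $\hat x_0$ requires careful bookkeeping of the residual inner products generated by the perturbation of \eqref{eq:hom_norm_der}, distinguishing the region where the $\min$ truncation is active (and the closed loop is essentially LTI) from the nonlinear region where the homogeneous Lyapunov argument of Theorem \ref{thm:PT_hom_control} governs. Compactness of the unit sphere and continuity of the right-hand side in $(y,\hat x_0)$ away from $y=\zero$ are what make the uniform estimate feasible, but turning this into an explicit $\mathcal{KL}$--$\mathcal{K}$ decomposition with the correct dependence on $\|x_0\|$ is the main technical step of the argument.
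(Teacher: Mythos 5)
Your overall architecture matches the paper's: the nominal claims are inherited from Theorem \ref{thm:PT_hom_control} because trajectories satisfy $\|x(t)/\|x_0\|\|_{\dn}\le 1$ and hence never see the $\min$-truncation; the ISS estimate is obtained by the normalization $z=x/\|x_0\|$, which places the initial condition on the unit sphere and rescales the inputs to $q_i/\|x_0\|$; the limit \eqref{eq:limit_gains} follows from $u_{\rm ct}(x+q_1,x_0)=u_{\rm lin}(x+q_1)$ when $\|x+q_1\|\ge\|x_0\|$; and the matched-perturbation bound comes from re-running the derivative computation of Theorem \ref{thm:PT_hom_control} with the extra term, bounding the denominator by $\lambda_{\min}(X^{-1/2}G_{\dn}X^{1/2}+X^{1/2}G_{\dn}^{\top}X^{-1/2})$ and using $\|\dn(-\ln\|z\|_{\dn})z\|=1$ to get $\tfrac{d}{dt}\|z\|_{\dn}\le -\tfrac{1}{T}+\tfrac{1}{\rho T}$. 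Those three pieces are essentially identical to the paper's proof.

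The one place you diverge is the step you yourself flag as the hard part, and there your proposed technique has a real gap. You want to prove ISS of the normalized system by differentiating $V(y)=\|y\|_{\dn}$ and ``expanding the perturbation of \eqref{eq:hom_norm_der} in $\tilde q_1,\tilde q_2$.'' For the additive channel $\tilde q_2$ this works (it is exactly what is done for $B\gamma$). For the measurement-noise channel $\tilde q_1$ it does not: the noise enters \emph{inside} the feedback, through the gain $\dn(-\ln\|y+\tilde q_1\|_{\dn})$, which is discontinuous at $y+\tilde q_1=\zero$ and has unbounded differential as its argument approaches the origin (negative homogeneity degree), so there is no valid first-order expansion in $\tilde q_1$ and no uniform $\mathcal{O}(\|\tilde q_1\|)$ remainder. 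The paper closes precisely this gap by a structural observation rather than a computation: the normalized system is homogeneous in the bi-limit (negative degree at the $0$-limit, zero degree at the $\infty$-limit), both homogeneous approximations are globally asymptotically stable, and the ISS with respect to both channels --- including the noise inside the non-Lipschitz gain --- is then a citation of Andrieu, Praly and Astolfi. If you want to keep a self-contained Lyapunov argument, you would need the known (and nontrivial) result that asymptotically stable homogeneous systems of negative degree are ISS with respect to measurement noise; without it, your second step does not go through.
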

\begin{proof}
     The identity $\frac{d}{dt}\|x(t)/\|x_0\||_{\dn}=-\frac{1}{T}$ 
     proven in Theorem \ref{thm:PT_hom_control}  holds also for the unperturbed closed-loop system \eqref{eq:mainsystem}, \eqref{eq:PT_hom_control} since $\|x(t)/\|x_0\|\|_{\dn}\leq 1$ for all $t\geq 0$. So, all conclusions of Theorem \ref{thm:PT_hom_control} remain valid.

	Making the change of variables $z=x/\|x_0\|$ we derive
	\begin{equation}\label{eq:z_system}
		\left\{
		\begin{array}{l}
			\dot z\stackrel{a.e.}{=}Az+B\left(K_0\left(z+q_1^{x_0}\right)+K_T\dn\left(-\ln \min\{\left\|z+q_1^{x_0}\right\|_{\dn},1\}\right)\left(z+q_1^{x_0}\right)\right)+q_2^{x_0}, \\
			z(0)=z_0,
		\end{array}
		\right.
	\end{equation}
	where $q_{1,x_0}=\tfrac{q_1}{\|x_0\|}, q_{2,x_0}=\tfrac{q_2}{\|x_0\|}$,  $K_T=K\dn(-\ln T)$ and 
	$z_0=\tfrac{x_0}{\|x_0\|}$.
	The system \eqref{eq:z_system} is homogeneous in the bi-limit \cite{Andrieu_etal2008:SIAM_JCO}(
	with the zero degree in $\infty$-limit and negative degree in $0$-limit. The unperturbed system \eqref{eq:z_system}
	as well as its homogeneous approximations are globally asymptotically stable. The latter implies ISS with respect to $q_1^{x_0}$ and $q_2^{x_0}$ in the view of the results \cite{Andrieu_etal2008:SIAM_JCO}. Hence, we derive the estimate 
	\eqref{eq:ISS_like_estimate}. 
	By construction, the  stabilizer \eqref{eq:PT_hom_control_robust} tends to the exponentially stabilizing linear feedback $u_{\rm lin}(x)=(K_0+K\dn(-T))x$ as $x_0\to \zero$, but, $x\mapsto \|x\|$ is the Lyapunov function of the corresponding  linear system. Moreover, $u_{\rm ct }(x+q_1,x_0)=u_{lin}(x+q_1)$ if $\|x+q_1\|>\|x_0\|$. The latter implies \eqref{eq:limit_gains}.
	
	Finally, assuming $q_1=\zero$ and $q_2=B\gamma$ we derive $q_1^{x_0}=\zero$ and $q_2^{x_0}=B\frac{\gamma}{\|x_0\|}$. In this case, repeating the proof of Theorem \ref{thm:PT_hom_control} we derive
	\begin{equation}
	\frac{d}{dt}\|z\|_{\dn}\stackrel{a.e.}{=}-\frac{1}{T}+\frac{z^{\top}\dn^{\top}(-\ln \|z\|_{\dn})\dn^{\top}(-\ln T)X^{-1}\dn(-\ln T)B\frac{\gamma}{x_0}}{
		z^{\top}\dn^{\top}(-\ln \|z\|_{\dn})\dn^{\top}(-\ln T)X^{-1}G_{\dn}\dn(-\ln T)\dn(-\ln \|z\|_{\dn})z}
	\end{equation}
	Since $\|\dn(-\ln \|z\|_{\dn})z\|=1$ or, equivalently,
	\begin{equation}
	z^{\top}\dn^{\top}(-\ln \|z\|_{\dn})\dn^{\top}(-\ln T)X^{-1}\dn(-\ln T)\dn(-\ln \|z\|_{\dn})z=1
	\end{equation} then 
	\begin{equation}
	\frac{d}{dt}\|z\|_{\dn}\stackrel{a.e.}{\leq } -\tfrac{1}{T}+\tfrac{2 \|B\gamma\|}{\|x_0\|\lambda_{\min}(X^{-1/2}G_{\dn}X^{1/2}+X^{1/2}G_{\dn}^{\top}X^{-1/2})}\leq -\tfrac{1}{T}+\tfrac{1}{\rho T}=-\tfrac{\rho-1}{\rho T}.
	\end{equation} 
	Taking into account $\|z(0)\|_{\dn}=\left\|\frac{x(0)}{x_0}\right\|_{\dn}=1$ we derive $z(t)=\zero$ for $t\geq T_{\max}$.	
\end{proof}

 \begin{remark}\label{rem:hatx_0}
    If the measurement of the state $x_0$ at the initial instant of time $t=t_0$ is noised as well
    \begin{equation}
    \hat x_0=x_0+q_0, \quad q_0\in \R^n 
    \end{equation}
    then the estimate \eqref{eq:ISS_like_estimate} remains valid 
    by replacing $x_0$ with $\hat x_0$. 
\end{remark}

For small initial conditions the robustness properties  of the closed-system \eqref{eq:mainsystem} with the non-linear feedback \eqref{eq:PT_hom_control_robust} are close to the system with the linear feedback $u_{\rm lin}$.
However, the time-varying  controller \cite{Song_etal2017:Aut} and the fixed-time controller \cite{Polyakov2020:Book} are known to be efficient in rejection of the matched additive disturbances. 
The estimate \eqref{eq:pt_gamma} shows that the static prescribed-time controller \eqref{eq:PT_hom_control_robust} rejects the matched perturbation $\gamma$ of a magnitude proportional $\|x_0\|$.

The further modification of the obtained feedback law 
\begin{equation}\label{eq:FxT_hom_control}
	u_{\rm fxt} (x,x_0)\!=\!K_0x+K\dn(-\ln T)\dn\left(-\ln \min\left\{ 1, \left\|\tfrac{x}{\max\{\|x_0\|,1\}}\right\|_{\dn}\right\}\right)x \quad \text{for}\quad  x_0\!\neq\! \zero,	
\end{equation} 
allows us to enlarge a class of matched perturbations to be rejected, but it guarantees just the fixed-time stabilization (see Definition \ref{def:stab}) with the prescribed upper bound $T_{\max}=T>0$ of the settling time. It is worth stressing that the settling-time estimate  is exact for $\|x_0\|\geq 1$.
\begin{theorem}\label{thm:FxT_hom_control} Let the parameters of the control \eqref{eq:PT_hom_control} be defined as in Theorem \ref{thm:PT_hom_control}.
	The closed-loop system \eqref{eq:mainsystem}, \eqref{eq:FxT_hom_control} is fixed-time stable such that 
	the settling-time  function admits the representation 
	\begin{equation}
	\tilde T(x_0)=\left\{
	\begin{array}{lcc}
		T & \text{ if } & \|x_0\|\geq 1,\\
		T\|x_0\|_{\dn} & \text{ if }  & \|x_0\|\leq 1.
	\end{array}
	\right.
	\end{equation}
	Moreover, the perturbed closed-loop system \eqref{eq:system_q}, \eqref{eq:FxT_hom_control} is 
	\begin{itemize}
		\item ISS with respect to the additive  measurement noises $q_1\in L^{\infty}(\R,\R^n)$ and additive exogenous perturbations $q_2\in L^{\infty}(\R,\R^n)$
		\item fixed-time stable  with $T_{\max}=\frac{\rho T}{\rho-1}$ provided that $\rho>1$, $q_1=\zero$, $q_2=B\gamma$, 
		\begin{equation}
		\|B\gamma(t)\|\leq \max\{1,\|x_0\|\}\frac{\lambda_{\min}(X^{-1/2}G_{\dn}X^{1/2}+X^{1/2}G_{\dn}^{\top}X^{-1/2})}{2\rho T}, \quad \forall t\in \R.
		\end{equation}
	\end{itemize}
\end{theorem}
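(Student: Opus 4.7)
The plan is to reduce the theorem to the analyses already carried out in Theorems \ref{thm:PT_hom_control} and \ref{thm:PT_hom_control_robust}, by exploiting the fact that the factor $r:=\max\{\|x_0\|,1\}$ is a constant along each trajectory (determined once and for all by the initial condition), and splitting into the two cases $\|x_0\|\geq 1$ and $\|x_0\|<1$.

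In the first case, $\|x_0\|\geq 1$, we have $r=\|x_0\|$, so $u_{\rm fxt}(x,x_0)\equiv u_{\rm ct}(x,x_0)$ and Theorem \ref{thm:PT_hom_control_robust} applies verbatim, delivering settling time exactly $T$, which matches $\tilde T(x_0)=T$ in this regime. In the second case, $\|x_0\|\leq 1$, we have $r=1$ and hence $u_{\rm fxt}$ becomes $x_0$-independent:
\begin{equation}
u_{\rm fxt}(x,x_0)=K_0x+K\dn(-\ln T)\dn\left(-\ln\min\{1,\|x\|_{\dn}\}\right)x.
\end{equation}
Since $\|x_0\|\leq 1$ implies $\|x_0\|_{\dn}\leq 1$ (monotonicity of $\dn$), and the Lyapunov candidate $V(x)=\|x\|_{\dn}$ will be shown to be non-increasing, along the closed-loop trajectory one has $\min\{1,\|x\|_{\dn}\}=\|x\|_{\dn}$, so the controller coincides with the controller analyzed in Theorem \ref{thm:PT_hom_control} but with the formal replacement $\|x_0\|\rightsquigarrow 1$. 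Repeating the calculation of $\frac{d}{dt}\|x/r\|_{\dn}$ from the proof of Theorem \ref{thm:PT_hom_control} with $r=1$ and invoking \eqref{eq:LMI_Rn}, I would obtain $\tfrac{d}{dt}\|x(t)\|_{\dn}=-\tfrac{1}{T}$ as long as $x(t)\neq\zero$, so $\|x(t)\|_{\dn}=\|x_0\|_{\dn}-t/T$ and the trajectory reaches the origin at $t=T\|x_0\|_{\dn}$, matching the claimed settling-time formula.

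For the ISS statement, I would proceed exactly as in the proof of Theorem \ref{thm:PT_hom_control_robust}: perform the change of variables $z=x/r$ (with $r=\max\{\|x_0\|,1\}$ a fixed constant) to obtain a $z$-system whose vector field is homogeneous in the bi-limit in the sense of \cite{Andrieu_etal2008:SIAM_JCO}, with degree $0$ at infinity (where the controller reduces to $u_{\rm lin}$) and degree $-1$ at the origin. The unperturbed $z$-system and both of its homogeneous approximations are GAS by construction (the latter coinciding with the closed-loop systems of the linear feedback $u_{\rm lin}$ and of the pure homogeneous feedback of Theorem \ref{thm:PT_hom_control}, respectively), and the bi-limit ISS results of \cite{Andrieu_etal2008:SIAM_JCO} then yield ISS with respect to $q_1,q_2$.

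For the matched-perturbation rejection statement, I would mimic the final calculation of Theorem \ref{thm:PT_hom_control_robust} using $V(x)=\|x/r\|_{\dn}$ with $r=\max\{\|x_0\|,1\}$. Plugging $q_1=\zero$, $q_2=B\gamma$ into the closed-loop dynamics and reusing \eqref{eq:hom_norm_der} together with \eqref{eq:LMI_Rn} and the normalization $\|\dn(-\ln\|z\|_{\dn})z\|=1$, one obtains
\begin{equation}
\tfrac{d}{dt}V\stackrel{a.e.}{\leq}-\tfrac{1}{T}+\tfrac{2\|B\gamma\|}{r\,\lambda_{\min}(X^{-1/2}G_{\dn}X^{1/2}+X^{1/2}G_{\dn}^{\top}X^{-1/2})}\leq -\tfrac{\rho-1}{\rho T}
\end{equation}
under the hypothesis on $\|B\gamma\|$, and integration from $V(x_0)\leq 1$ gives the announced fixed-time bound $T_{\max}=\tfrac{\rho T}{\rho-1}$.

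The main obstacle I anticipate is purely bookkeeping at the boundary $\|x_0\|=1$: I need to verify that the settling-time function defined piecewise is indeed locally bounded (so fixed-time stability is well-defined) and to check continuity of $u_{\rm fxt}$ across $\|x_0\|=1$, which follows because both branches of the $\max$ produce the same value of $u_{\rm fxt}$ when $\|x_0\|=1$. The rest is a direct transcription of the Lyapunov and bi-limit homogeneity arguments already established.
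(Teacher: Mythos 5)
Your proposal is correct and follows exactly the route the paper intends: the paper's entire proof of this theorem is the single sentence that it ``repeats the proofs of Theorems \ref{thm:PT_hom_control} and \ref{thm:PT_hom_control_robust},'' and your case split on $\max\{\|x_0\|,1\}$, the forward-invariance bootstrap giving $\tfrac{d}{dt}\|x\|_{\dn}=-1/T$ in the regime $\|x_0\|\leq 1$, the bi-limit homogeneity argument for ISS, and the rescaled matched-perturbation estimate are precisely the intended transcription. If anything, your writeup is more explicit than the paper's, so no gap to report.
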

The proof repeats the proofs of Theorems \ref{thm:PT_hom_control} and  \ref{thm:PT_hom_control_robust}.

\subsection{Predictor-based stabilization of input delay LTI plant}
{
The so-called predictor-based approach \cite{ManitiusOlbrot1979:TAC}, \cite{Artstein1982:TAC},  \cite{Krstic2008:Aut}, \cite{Krstic2009:Book}, \cite{KarafyllisKrstic2018:Book} allows the 
delay-free control design ideas to be extended to input delay systems. The fixed-time stabilizer for the linear generalized homogeneous plant (the integrator chain) with input-delay has been proposed in \cite{Zekraoui_etal2023:Aut} based on the  technique developed in \cite{Krstic2009:Book}, \cite{KarafyllisKrstic2018:Book}, which consist in the modeling of the input delay using a transport PDE (Partial Differential Equation). The control design based on PDE models has one technical limitation: the theory of partial differential equations is not supported with a well-established common  methodology for analysis and design of control systems with state-dependent discontinuities such as to Filippov's method \cite{Filippov1988:Book} for discontinuous ODEs and sliding mode control system \cite{Utkin1992:Book}. 
Some ideas for possible expansion of the sliding mode (discontinuous) control methodology to infinite dimensional system can be found in \cite{Orlov2020:Book}. However, this technique is far to be universal, well-recognized and easy-to-use.
Since our controller \eqref{eq:PT_hom_control_robust}  has the discontinuity at the origin, then the PDE-based design of fixed-time input delay controller is expected to be complicated. However, for linear plants, the predictor-based control design can be easily done using the well-known Artstein's  transformation \cite{Artstein1982:TAC}:
\begin{equation}\label{eq:predictor}
	y(t)=e^{A \tau}x(t)+\int^0_{-\tau}e^{-A\theta}Bu(t+\theta) d\theta, \quad t>0,
\end{equation}
where $\tau>0$ is the input delay. The variable $y$ is the so-called predictor variable, since it estimates the future state $x(t+h)=y(t)$ of the system \eqref{eq:mainsystem}. Notice that if $u\in L^{\infty}([(-\tau,+\infty),\R^m)$ then 
\[
\begin{split}
\dot y(t)\stackrel{a.e.}{=}&e^{A\tau} \dot x(t)+\tfrac{d}{dt}\left(e^{A t} \int^t_{t-\tau}e^{-A\sigma}Bu(\sigma) d\sigma\right)\\
\stackrel{a.e.}{=}&
e^{A\tau} (Ax(t)\!+\!Bu(t-\tau))\!+\!Ae^{A t} \int^t_{t-\tau}e^{-A\sigma}Bu(\sigma) d\sigma
\!+\!e^{A t}\!\tfrac{d}{dt}\!\int^t_{t-\tau}\!\!\!\!\!\!\!\!e^{-A\sigma}Bu(\sigma) d\sigma
\\
\stackrel{a.e.}{=}&Ay(t)+e^{A\tau}Bu(t-\tau)+e^{At}e^{-At}Bu(t)-e^{At}e^{-A(t-\tau)}Bu(t-\tau).
\end{split}
\]
Therefore, the dynamics of the predictor variable is governed by the ODE
\begin{equation} \label{eq:system_y}
\dot y(t)\stackrel{a.e.}{=}Ay(t)+Bu(t), \quad t>0, \quad y(0)=y_0,
\end{equation}
where $y_0=e^{A\tau}x_0+\int^0_{-\tau}e^{-A\sigma}\phi(\sigma)d\sigma$ and $\phi\in L^{\infty}((-\tau,0),\R^m)$ defines the control signal $u$ (see, \eqref{eq:initial_condition}) on the time interval $[-\tau,0]$. 

The Artstein's transformation reduces the problem of a control design for the input delay system to the same problem in the delay free case.
By Theorem \ref{thm:PT_hom_control_robust}, the fixed-time stabilizer for the delay-free system \eqref{eq:system_y} can be designed in the form of the discontinuous feedback \eqref{eq:PT_hom_control_robust}. 
Following the Filippov regularization technique (see Lemma \ref{lem:wp_delay_free}) we define
\begin{equation}\label{eq:PT_hom_control_y}
   u(t)\stackrel{a.e.}{\in} \bar{u}_{\rm ct}(y(t),y_0):=\left
\{
\begin{array}{ccc}
 u_{\rm ct}(y(t),y_0) & \text{ if } & y(t)\neq\zero,\\
 \|y_0\|K\dn(-\ln T) \mathcal{B}& \text{ if } & y(t)=\zero,
\end{array}
\right.   
\end{equation}
where $u_{\rm ct}$ is given by \eqref{eq:PT_hom_control_robust}
and $\mathcal{B}=\{x\in \R^n: \|x\|\leq 1\}$ is the unit ball.
\begin{lemma}[Well-posedness of the input delay control system] 
	For any $x_0\neq \zero$ and any $\phi\in L^{\infty}((-\tau,0),\R^m)$, there exists a tuple $$(x,y,u)\in W^{1,\infty}_{\rm loc}(\R_+,\R^n) \times  W^{1,\infty}_{\rm loc}(\R_+,\R^n) \times  L^{\infty}_{\rm loc}([-\tau,+\infty),\R^m)$$ satisfying \eqref{eq:mainsystem}, \eqref{eq:initial_condition}, \eqref{eq:PT_hom_control_y}, \eqref{eq:predictor}.

\end{lemma}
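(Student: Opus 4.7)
The plan is to reduce the input-delay problem to the delay-free closed-loop predictor dynamics \eqref{eq:system_y}, apply the already-proven well-posedness result (Lemma \ref{lem:wp_delay_free}) to obtain a Filippov solution $y$, and then synthesize $u$ and $x$ in turn. The Artstein transformation has already been shown to convert the plant into the ODE \eqref{eq:system_y} driven by the undelayed control $u(t)$, which brings us exactly into the setting of the delay-free well-posedness lemma.

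First, I would set $y_0 := e^{A\tau}x_0 + \int_{-\tau}^0 e^{-A\sigma}B\phi(\sigma)\,d\sigma \in \R^n$ and consider the delay-free closed-loop differential inclusion $\dot y \in Ay + B\bar u_{\rm ct}(y, y_0)$ with $y(0) = y_0$. Its right-hand side has exactly the same structure as the Filippov regularization treated in Lemma \ref{lem:wp_delay_free}: upper semi-continuous, single-valued on $\R^n_{\zero}$, convex- and compact-valued at $\zero$, and satisfying the linear-growth estimate required by Winter's theorem. That lemma (with $y_0$ playing the role of $x_0$) therefore yields a global-in-time Filippov solution $y \in W^{1,\infty}_{\rm loc}(\R_+, \R^n)$, regardless of whether $y_0$ vanishes.

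Next, I would extract $u$ from $y$ by measurable selection. Put $u(t) := \phi(t)$ for $t \in (-\tau, 0)$, and for $t \geq 0$ choose $u(t) \in \bar u_{\rm ct}(y(t), y_0)$ so that $\dot y(t) \stackrel{a.e.}{=} Ay(t) + Bu(t)$: at a.e.\ point with $y(t) \neq \zero$ the set $\bar u_{\rm ct}(y(t), y_0)$ is a singleton and $u(t)$ is forced, whereas at points with $y(t) = \zero$ the choice $u(t) := \zero \in \|y_0\| K\dn(-\ln T)\mathcal{B}$ is admissible (and it matches $\dot y(t) \stackrel{a.e.}{=} \zero$ on that set). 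Local boundedness of $\bar u_{\rm ct}$ together with absolute continuity of $y$ delivers $u \in L^{\infty}_{\rm loc}([-\tau, +\infty), \R^m)$. Then I would let $x$ be the unique Carath\'eodory solution of the linear equation $\dot x(t) = Ax(t) + Bu(t-\tau)$, $x(0) = x_0$; since $u$ is locally essentially bounded, $x \in W^{1,\infty}_{\rm loc}(\R_+, \R^n)$.

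It remains to verify the Artstein identity \eqref{eq:predictor}. Define $\tilde y(t) := e^{A\tau}x(t) + \int_{-\tau}^0 e^{-A\sigma}Bu(t+\sigma)\,d\sigma$. The computation carried out in the paragraph preceding \eqref{eq:system_y} shows $\dot{\tilde y} \stackrel{a.e.}{=} A\tilde y + Bu(t)$, while $\tilde y(0) = y_0$ by construction. Hence $\tilde y$ and $y$ solve the same linear ODE with identical initial data and identical forcing $Bu$, so $\tilde y \equiv y$, which is precisely \eqref{eq:predictor}. The most delicate point is the measurable selection of $u$ on the set where $y(t) = \zero$; however, by Theorem \ref{thm:PT_hom_control_robust} this set equals either $\emptyset$ (when $y_0 = \zero$, in which case $y \equiv \zero$ and the zero control works trivially) or $[T, +\infty)$, on which the constant selection $u \equiv \zero$ is measurable and compatible with $\dot y \stackrel{a.e.}{=} \zero$, so this potential obstacle dissolves.
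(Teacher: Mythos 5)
Your proof is correct and follows essentially the same route as the paper: obtain a global Filippov solution $y$ of the predictor inclusion from Lemma \ref{lem:wp_delay_free}, extract a locally essentially bounded control $u$ with $\dot y\stackrel{a.e.}{=}Ay+Bu$, extend it by $\phi$ on $(-\tau,0)$, and recover $x$ through the Artstein transformation. The only variations are minor: the paper invokes Filippov's measurable-selection lemma where you build the selector explicitly from the structure of the zero set of $y$ (both work), and the paper defines $x$ by the inverse Artstein formula and differentiates to verify \eqref{eq:mainsystem}, whereas you solve the plant ODE for $x$ and confirm \eqref{eq:predictor} by uniqueness for the linear ODE---an equivalent verification.
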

\begin{proof} 
By Lemma \ref{lem:wp_delay_free} the system \eqref{eq:system_y},\eqref{eq:PT_hom_control_y} is well-posed and
has a Filippov solution $y:\R_+\mapsto \R^n$. The  Filippov's lemma \cite{Filippov1962:SIAM} about measurable selector guarantees that 
there exists a measurable function  $u\in L^{\infty}_{\rm loc}(\R_+,\R^m)$ such that 
\begin{equation}
  \left\{
  \begin{array}{l}
  \dot y(t)\stackrel{a.e.}{=}Ay(t)+Bu(t), \\
  u(t)\stackrel{a.e.}{\in} \bar{u}_{\rm ct}(y(t),y_0),
  \end{array}
  \right.\quad t>0.
\end{equation}

Let us extend the signal $u$ to the time interval $(-\tau,+\infty)$ using the initial condition \eqref{eq:initial_condition}.
Applying the inverse Artstein's tranformation  $$x(t)=e^{-A\tau}\left(y(t)-\int^{0}_{-\tau}
e^{-A\sigma}u(t+\sigma) d\sigma)\right)$$ 
 we derive  $x(0)=x_0$ and 
 \[
 \dot x(t)=e^{-A\tau }\left(\dot y-\tfrac{d}{dt}\left(e^{A t} \int^t_{t-\tau}e^{-A\sigma}Bu(\sigma) d\sigma\right)\right)=
 \]
 \[
 e^{-A\tau }\left(Ay(t)+Bu(t)-\tfrac{d}{dt}\left(e^{A t} \int^t_{t-\tau}e^{-A\sigma}Bu(\sigma) d\sigma\right)\right)=Ax(t)+Bu(t-\tau).
 \]
 Therefore, the the constructed tuple $(x,y,u)$ satisfies 
\eqref{eq:mainsystem}, \eqref{eq:initial_condition}, \eqref{eq:PT_hom_control_y}, \eqref{eq:predictor}.
\end{proof}

The following theorem proves that the closed-loop input-delay system  is fixed-time stable with a constant settling time.
\begin{theorem}\label{thm:PT_predictor}
If all parameters of the control \eqref{eq:PT_hom_control_y} are defined as in Theorem \ref{thm:PT_hom_control} and $\phi=\zero$ then the system \eqref{eq:mainsystem} with the control signal generated by the formulas \eqref{eq:initial_condition}, \eqref{eq:PT_hom_control_y}, \eqref{eq:predictor} is fixed-time stable (in the sense of Definition \ref{def:stab}) with the constant settling time $T+\tau$ , namely,  $\exists \alpha \in \mathcal{K} : \|x(t)\|\leq \alpha(\|x_0\|),\forall t\geq 0$ and 
 \begin{equation}
     x(t)=\zero,\quad \forall t\geq T+\tau
 \end{equation}
 independently of $x_0\in \R^n$, but $x(t)\neq \zero$ for all $t\in[0,T+\tau)$ provided that $x_0\neq \zero$.
\end{theorem}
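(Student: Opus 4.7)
The plan is to apply Theorem \ref{thm:PT_hom_control} to the delay-free predictor dynamics \eqref{eq:system_y} and then transport the conclusion back to $x$ via the Artstein equivalence. The crucial role played by the hypothesis $\phi=\zero$ is to ensure that the initial value of the predictor reduces to
\begin{equation}
y_0=e^{A\tau}x_0+\int_{-\tau}^{0}e^{-A\sigma}B\phi(\sigma)\,d\sigma=e^{A\tau}x_0,
\end{equation}
so that $y_0=\zero$ if and only if $x_0=\zero$. Any nonzero $\phi$ could cancel $e^{A\tau}x_0$ and yield $y_0=\zero$ with $x_0\neq \zero$, in which case the predictor would be identically zero and the settling time would collapse to $\tau$ (or worse, fail to be constant across initial data) — this is precisely why the statement restricts to $\phi=\zero$.

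Next, I would invoke Theorem \ref{thm:PT_hom_control} on the closed-loop system $\dot y = Ay + Bu$, $u\in \bar u_{\rm ct}(y,y_0)$, with initial state $y_0=e^{A\tau}x_0$. Since the construction of $\bar u_{\rm ct}$ coincides with \eqref{eq:PT_hom_control} on $y\neq\zero$ and matches the Filippov regularization \eqref{eq:Filippov_regularization} at $y=\zero$, Theorem \ref{thm:PT_hom_control} yields: the predictor trajectory satisfies $\|y(t)\|\le \|y_0\|$ for all $t\ge 0$, $y(t)\neq\zero$ for $t\in[0,T)$ whenever $y_0\neq\zero$, and $y(t)=\zero$ for all $t\ge T$. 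In particular, the settling time of $y$ to the origin is exactly $T$ for every $y_0\neq\zero$, and zero if $y_0=\zero$.

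Now I would translate these properties to $x$. From the identity derived in the well-posedness lemma (solving the input-delay ODE forward by $\tau$ units), one obtains $y(t)=x(t+\tau)$ for all $t\ge 0$, which is the inverse Artstein relation. Consequently $x(t)=\zero$ for all $t\ge T+\tau$, giving the fixed-time convergence. For the non-vanishing on $[0,T+\tau)$ when $x_0\neq\zero$, split into two intervals: on $[0,\tau]$, the initialization $\phi=\zero$ makes \eqref{eq:mainsystem} reduce to $\dot x = Ax$, so $x(t)=e^{At}x_0\neq\zero$; on $[\tau,T+\tau)$, we use $x(t)=y(t-\tau)$ together with $y(s)\neq\zero$ for $s\in[0,T)$, where nonvanishing holds because $y_0=e^{A\tau}x_0\neq\zero$. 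Thus the exact (constant) settling time $T+\tau$ follows. Lyapunov stability is obtained by combining the bound $\|x(t)\|\le e^{\|A\|\tau}\|x_0\|$ on $[0,\tau]$ with $\|x(t)\|=\|y(t-\tau)\|\le\|y_0\|\le e^{\|A\|\tau}\|x_0\|$ on $[\tau,T+\tau]$, yielding $\alpha(r)=e^{\|A\|\tau}r$.

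The main obstacle is not in the algebra but in reconciling solution concepts: the $y$-equation is a Filippov differential inclusion (the feedback is discontinuous at $\zero$), while the $x$-equation is an a.e.\ ODE driven by the measurable selection $u$ produced in the well-posedness lemma. I would justify the identity $y(t)=x(t+\tau)$ on the set of full measure where $\dot x(t)=Ax(t)+Bu(t-\tau)$ holds, and then use absolute continuity of both $x$ and $y$ to extend it everywhere; this is exactly the computation already carried out in the proof of the well-posedness lemma, so no new regularity argument is needed. Once that identification is in hand, Theorem \ref{thm:PT_hom_control} transfers the constant-settling-time property to $x$ via time-shift by $\tau$.
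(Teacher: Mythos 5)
Your proposal is correct and follows essentially the same route as the paper: apply the delay-free constant-settling-time result to the predictor dynamics $\dot y=Ay+Bu$ with $y_0=e^{A\tau}x_0$, then transport the conclusion back to $x$ through the Artstein transformation, using $\phi=\zero$ both to guarantee $y_0\neq\zero$ for $x_0\neq\zero$ and to rule out vanishing of $x$ on $[0,\tau]$. The only differences are cosmetic: the paper invokes Theorem \ref{thm:PT_hom_control_robust} rather than Theorem \ref{thm:PT_hom_control} (since $\bar{u}_{\rm ct}$ is built from the modified controller \eqref{eq:PT_hom_control_robust}, which coincides with \eqref{eq:PT_hom_control} only on the trajectory-invariant set $\|y\|\leq\|y_0\|$ rather than on all of $\R^n_{\zero}$ as you assert), and it establishes $x(t)\neq\zero$ on $(0,\tau]$ by a contradiction argument via the Cauchy formula instead of your more direct observation that $x(t)=e^{At}x_0$ there.
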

\begin{proof} Let $(x,y,u)\in W^{1,\infty}_{\rm loc}(\R_+,\R^n) \times  W^{1,\infty}_{\rm loc}(\R_+,\R^n) \times  L^{\infty}_{\rm loc}([-\tau,+\infty),\R^m)$ satisfy  \eqref{eq:mainsystem},\eqref{eq:initial_condition}, \eqref{eq:predictor}, \eqref{eq:PT_hom_control_y}.
Then, due to Artstein's transformation, $y$ satisfies the following differential inclusion

\begin{equation} \label{eq:system_y2}
\dot y(t)\stackrel{a.e.}{\in } Ay(t)+B\overline u_{ct}(y(t),y_0), \quad t>0, \quad y(0)=y_0.
\end{equation}
Since, by Theorem \ref{thm:PT_hom_control_robust}, the latter system is fixed-time stable with a constant settling time $T>0$ and has the unique solution: 
\[
y(t)=\zero \quad \text{ for all } t\geq T
\]
and $y(t)\neq \zero$ for all $t\in[0,T)$ and all $y_0\in \R^n$.
Moreover, since $\phi=\zero$ then $y_0=e^{A\tau }x_0$ and $x_0\neq \zero \Rightarrow y_0\neq \zero$, so  $y(t)\neq \zero$
for all $t\in [0,T)$
independently of $x_0\in \R^n_\zero$.
Since $y(t)=\dot y(t)=\zero,\forall t> T$ then, in the view of the equation \eqref{eq:system_y}, we have $Bu(t)\stackrel{a.e.}{=}\zero, t>T$. Therefore,
\begin{equation}
\left\{
\begin{array}{l}
	\zero\stackrel{a.e.}{=}Bu(t),\\
	\zero=e^{A\tau} x(t)+\int\limits^0_{-\tau} e^{-A\theta} Bu(t+\theta) d\theta, 
\end{array}
\right. \quad \forall t> T \quad \Rightarrow \quad x(t)=\zero, \quad \forall t\geq T+\tau.
\end{equation}
The identity $x(t)=y(t-\tau)$ holds for all $t\geq \tau$. 
To prove the constant convergence time, we just need to show
that $x_0\neq \zero \Rightarrow x(t)\neq \zero$ for all $t\in [0,\tau]$.
Suppose the contrary, i.e., for some $x_0\neq \zero$ there exists $t^*\in (0,\tau]$ such that 
\begin{equation}
    x(t^*)=\zero.
\end{equation}
On the one hand, since  $x(t^*)=e^{-A\tau} \left(y(t^*)-\int\limits^0_{-\tau} e^{-A\theta} Bu(t^*+\theta) d\theta\right)$ then 
\begin{equation}
 y(t^*)=\int\limits^{t^*}_{0} e^{A(t^*-\sigma)} Bu(\sigma) d\sigma.
\end{equation}
On the other hand, since $\phi=\zero$ and $y$ is satisfies \eqref{eq:system_y2}, then, by Cauchy formula, we have
\begin{equation}
y(t^*)=e^{At^*}y_0+\int\limits^{t^*}_{0} e^{A(t^*-\sigma)} Bu(\sigma) d\sigma.
\end{equation}
Hence, we derive $e^{At^*}y_0=\zero$, but the latter is possible 
if an only if $y_0=\zero$ (or, equivalently, $x_0=\zero$). We derive the contradiction.

Finally, since the system \eqref{eq:system_y2} is Lyapunov stable then there exists $\bar \alpha \in \mathcal{K}$ such that $\|y(t)\|\leq \bar \alpha(\|y_0\|),\forall t\geq 0$.
Taking into account
\begin{equation}
\|u_{\rm ct}(y,y_0)\|_{\R^m}\leq (\|K_0\|+\|K\dn(-\ln T)\|)\cdot  (\|y_0\|+ \|y\|).
\end{equation}
and  using \eqref{eq:predictor}, we derive
\begin{equation}
   \|x(t)\|\leq \|e^{-A\tau}\| \bar \alpha(\|y_0\|)+\int^{0}_{-h} \|e^{-A s}\|
   ds \cdot (\|K_0\|+\|K\dn(-\ln T)\|)\cdot  (\|y_0\|+ \bar \alpha(\|y_0\|).
\end{equation}
Since $y_0=e^{A\tau}x_0$ then there exists $\bar \alpha\in \mathcal{K}$ such that 
$\|x(t)\|\leq\bar \alpha(\|x_0\|),\forall t\geq 0$. The proof is complete.
 \end{proof}
 
Notice that it is impossible to assign the  constant settling time for all $x_0\neq \zero$ if $\phi\neq \zero$. Indeed, since the pair $\{A,B\}$ is controllable then $\int^{0}_{-\tau} e^{-A\theta} B\phi(\theta)d\theta\neq \zero$ for $\phi\neq \zero$. In this case, taking $x_0=-e^{-A\tau}\int^{0}_{-\tau} e^{-A\theta} B\phi(\theta)d\theta\neq \zero$ we derive  that the unique solution of  the closed-loop system satisfies $x(\tau)=y_0=\zero$. The latter implies that $x(t)=\zero$ for all $t\geq \tau$. 
However, if $x_0=-2e^{-A\tau}\int^{0}_{-\tau} e^{-A\theta} B\phi(\theta)d\theta\neq \zero$ and $x(\tau)\neq \zero$ for any  fixed-time stabilizing controller. Therefore, at least, the two different non-zero initial vectors $x_0\neq \zero$ corresponds  two different settling times of the system to zero.

To analyze a robustness with respect to perturbations (such as measurement noises, additive disturbances, computational errors for the predictor variable, etc) we consider the system
\begin{equation}\label{eq:closedloop_delay_q}
\left\{
\begin{array}{l}
\dot x(t)=Ax(t)+Bu(t-\tau)+q_2(t),\quad t>0,\quad x(0)=x_0,\\
y(t)=e^{A \tau}x(t)+\int^0_{-\tau}e^{-A\theta}Bu(t+\theta) d\theta,\\
u(t)\in \bar{u}_{\rm ct}(y(t)+q_1(t),y_0),
\end{array}
\right.
\end{equation}
where $\overline{u}_{\rm ct}$ is given above, $q_1\in L^{\infty}(\R,\R^n)$ models measurement noises and computational error for the predictor variable $y$, but $q_2\in L^{\infty}(\R,\R^n)$ models additive perturbations of the plant.
\begin{theorem}
Under the conditions of Theorem \ref{thm:PT_predictor}, any solution of the system \eqref{eq:closedloop_delay_q} satisfies the ISS-like estimate
	\begin{equation}\label{eq:ISS_like_estimate_delay}
		\|x(t)\|\leq \|e^{A\tau}x_0\|\cdot \beta(1,t-t_0)+\|e^{A\tau}x_0\|\cdot \gamma_1\left(\tfrac{\left\| q_1\right\|_{L^{\infty}_{(t_0,t)}}}{\|e^{A\tau}x_0\|}\right) +
		\|e^{A\tau}x_0\|\cdot \gamma_2\left(\tfrac{\left\| q_2\right\|_{L^{\infty}_{(t_0,t)}}}{\|e^{A\tau}x_0\|}\right), 
	\end{equation}
	where  $\beta\in \mathcal{KL}$ and $\gamma_1,\gamma_2\in\mathcal{K}$ are  independent of $x_0$, but depend on $T>0$ and $\tau>0$. 
\end{theorem}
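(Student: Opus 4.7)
The plan is to reduce the perturbed input-delay problem to the perturbed delay-free problem via the Artstein transformation, and then apply Theorem \ref{thm:PT_hom_control_robust} directly. First, I would differentiate the predictor variable $y$ along trajectories of the perturbed plant. The computation preceding \eqref{eq:system_y} carries over verbatim, with the single new contribution arising from the additive $q_2$ in $\dot x$; this yields
\begin{equation}
\dot y(t) \stackrel{a.e.}{=} Ay(t) + Bu(t) + e^{A\tau}q_2(t), \quad y(0)=y_0=e^{A\tau}x_0,
\end{equation}
where the initial condition uses $\phi=\zero$. Combined with $u(t)\in\overline u_{\rm ct}(y(t)+q_1(t),y_0)$, this places the $y$-subsystem exactly in the framework of the perturbed closed-loop system \eqref{eq:system_q} of Theorem \ref{thm:PT_hom_control_robust}, with measurement noise $q_1$ and additive disturbance $e^{A\tau}q_2$.

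Second, invoking Theorem \ref{thm:PT_hom_control_robust} on the $y$-subsystem immediately yields an ISS-like bound of the form
\begin{equation}
\|y(t)\|\leq \|y_0\|\tilde\beta(1,t-t_0)+\|y_0\|\tilde\gamma_1\!\left(\tfrac{\|q_1\|_{L^{\infty}_{(t_0,t)}}}{\|y_0\|}\right)+\|y_0\|\tilde\gamma_2\!\left(\tfrac{\|q_2\|_{L^{\infty}_{(t_0,t)}}}{\|y_0\|}\right),
\end{equation}
where $\tilde\beta\in\mathcal{KL}$ and $\tilde\gamma_1,\tilde\gamma_2\in\mathcal{K}$, and the constant $\|e^{A\tau}\|$ arising from the perturbation $e^{A\tau}q_2$ has been absorbed into $\tilde\gamma_2$; note that $\|y_0\|=\|e^{A\tau}x_0\|$.

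Third, I would invert the Artstein transformation. Rearranging the definition of the predictor gives
\begin{equation}
x(t)=e^{-A\tau}y(t)-e^{-A\tau}\int_{-\tau}^0 e^{-A\theta}Bu(t+\theta)\,d\theta,
\end{equation}
so that $\|x(t)\|$ is controlled by $\|y(t)\|$ and the essential supremum of $\|u\|$ over $[\max(0,t-\tau),t]$; for $s\in(-\tau,0)$ the initial data $\phi=\zero$ makes that contribution vanish. Using the definition of the canonical homogeneous norm, which gives $\|\dn(-\ln\|w\|_{\dn})w\|=1$ when $\|w\|_{\dn}\leq 1$ and $\dn(0)=I_n$ otherwise, one obtains from the formula \eqref{eq:PT_hom_control_robust} the pointwise linear bound
\begin{equation}
\|u(s)\|\leq C\bigl(\|y(s)\|+\|q_1(s)\|+\|y_0\|\bigr),\quad s\geq 0,
\end{equation}
with $C$ depending only on $K_0$, $K$, $T$, and the chosen norm. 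Substituting the estimate from step two into this bound and integrating over $[t-\tau,t]$ yields an estimate on $\|x(t)\|$ of the same shape as the bound on $\|y(t)\|$, after rebranding the $\mathcal{KL}$- and $\mathcal K$-functions; this is exactly \eqref{eq:ISS_like_estimate_delay}.

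The main technical obstacle is preserving the ISS-like normalization by $\|e^{A\tau}x_0\|$ while bounding the convolution term $\int_{t-\tau}^t\|u(s)\|\,ds$. The key observation that keeps the structure intact is that the linear bound on $u$ is homogeneous of degree one in $(y,q_1,y_0)$, so the scaling argument of Theorem \ref{thm:PT_hom_control_robust} (passing to $z=y/\|y_0\|$) extends through the integration, guaranteeing that each summand in the final estimate has the required form $\|y_0\|\cdot\gamma_i(\|q_i\|_{L^{\infty}}/\|y_0\|)$. A minor additional point concerns the time argument of $\tilde\beta$: taking the supremum of $\tilde\beta(1,s-t_0)$ over $s\in[t-\tau,t]$ replaces the argument by $(t-\tau-t_0)_+$, which is still of class $\mathcal{KL}$ in $t-t_0$ after the constant shift $\tau$ is absorbed into $\beta$.
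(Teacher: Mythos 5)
Your overall strategy coincides with the paper's: differentiate the predictor to get $\dot y\stackrel{a.e.}{=}Ay+Bu+e^{A\tau}q_2$ with $y(0)=e^{A\tau}x_0$, invoke Theorem \ref{thm:PT_hom_control_robust} to bound $\|y(t)\|$, and then invert the Artstein transformation to pass the estimate to $x$. The first two steps are fine. The gap is in the third step, where you bound the convolution term by integrating the pointwise estimate $\|u(s)\|\leq C\bigl(\|y(s)\|+\|q_1(s)\|+\|y_0\|\bigr)$. That estimate is correct, but the summand $C\|y_0\|$ does not decay in time: after integration over $[t-\tau,t]$ it contributes a term of order $\tau C\|y_0\|$ to $\|x(t)\|$ that persists for all $t$. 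Such a term can be written neither as $\|y_0\|\beta(1,t-t_0)$ with $\beta\in\mathcal{KL}$ (it does not tend to zero as $t\to\infty$) nor as $\|y_0\|\gamma_i(\|q_i\|_{L^{\infty}}/\|y_0\|)$ (it does not vanish when $q_1=q_2=\zero$). Your homogeneity-of-degree-one observation fixes the normalization by $\|y_0\|$, but not the time dependence; already in the unperturbed case your argument only delivers Lyapunov stability of $x$, not the asymptotic decay required by \eqref{eq:ISS_like_estimate_delay}. The underlying reason is that the Filippov-regularized control need not be small when $y$ is small (at $y=\zero$ it takes values in $\|y_0\|K\dn(-\ln T)\mathcal{B}$), so no pointwise bound on $\|u\|$ alone can show that $\int_{-\tau}^{0}e^{-A\theta}Bu(t+\theta)\,d\theta$ decays.

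The paper circumvents this by never estimating $\int\|u\|$ for large $t$: for $t\geq\tau$ it applies the Cauchy formula to the $y$-equation over $[t-\tau,t]$, which gives
\begin{equation}
\int^{0}_{-\tau}e^{-A\theta}Bu(t+\theta)\,d\theta
= y(t)-e^{A\tau}y(t-\tau)-\int^{0}_{-\tau}e^{-A\theta}e^{A\tau}q_2(t+\theta)\,d\theta ,
\end{equation}
so the convolution term is expressed exactly through $y(t)$, $y(t-\tau)$ and $q_2$, all of which are controlled by the ISS bound on $y$ (the time shift by $\tau$ being absorbed into $\beta$ as you note). The pointwise linear bound on $\|u\|$ is then used only on the initial interval $t\in[0,\tau]$, where the non-decaying $\|y_0\|$ contribution is harmless because a $\mathcal{KL}$ function may be taken large on a bounded initial time interval. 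To repair your proof, replace the integration of the pointwise bound by this Cauchy-formula identity for $t\geq\tau$ and keep your bound only on $[0,\tau]$.
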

\begin{proof}
    For $\phi=\zero$, the predictor dynamics is described by
    \begin{equation}\label{eq:system_y_pert}
    \left\{
    \begin{array}{l}
    \dot y(t)=Ay(t)+Bu(t)+e^{A\tau}q_2(t), \quad t>0, \quad y(0)=e^{A\tau}x_0,\\
    u(t)\in \bar{u}_{\rm ct}(y(t)+q_1(t),e^{A\tau}x_0).
    \end{array}
    \right.
    \end{equation}
   The ISS of the latter system is studied in Theorem \ref{thm:PT_hom_control_robust}, where it is shown that there exist $\beta^y\in\mathcal{KL}, \gamma_1^y,\gamma_2^y\in \mathcal{K}$
   such that
   \begin{equation}\label{eq:ISS_y}
   \|y(t)\|\!\leq\! \|e^{A\tau}x_0\|\cdot \beta^y(1,t-t_0)+\|e^{A\tau}x_0\|\cdot \gamma_1^y\!\left(\tfrac{\left\| q_1\right\|_{L^{\infty}_{(t_0,t)}}}{\|e^{A\tau}x_0\|}\right)\!+
		\|e^{A\tau}x_0\|\cdot \gamma_2^y\!\left(\tfrac{\left\| q_2\right\|_{L^{\infty}_{(t_0,t)}}}{\|e^{A\tau}x_0\|}\right)\!.
   \end{equation}
   On the other hand, by Cauchy formula, we have
\begin{equation}
y(t)=e^{A\tau}y(t-\tau)+\int^{0}_{-\tau} e^{-A\theta}Bu(t+\theta)d\theta +\int^{0}_{-\tau}e^{-A\theta} q_2(t+\theta)d\theta, \quad t\geq \tau,
\end{equation}
so
\begin{equation}
\left\|\int^{0}_{-\tau} \!\!\!\!e^{-A\theta}Bu(t+\theta)d\theta\right\|\!\leq\! \|y(t)\|+\|e^{A\tau}\| \left(\|y(t-\tau)\|
+\|q_2\|_{L^{\infty}_{(t_0,t)}}\tau\!\sup_{\theta\in [0,\tau]}\|e^{A\theta}\|\right)\!. 
\end{equation}
On the one hand, using the formula \eqref{eq:predictor} we derive
\begin{equation}
\|x(t)\|\leq \|e^{-A\tau}\|\left(\|y(t)\|+\left\|\int^{0}_{-\tau} e^{-A\theta}Bu(t+\theta)d\theta\right\|\right)\leq 
\end{equation}
\begin{equation}
\|e^{-A\tau}\|\left(2\|y(t)\|+\|e^{A\tau}\| \left(\|y(t-\tau)\|+\|q_2\|_{L^{\infty}_{(t_0,t)}}\tau \sup_{\theta\in [0,\tau]}\|e^{A\theta}\|\right)\right),\quad \forall t\geq \tau.
\end{equation}
On the other hand, since $\|u(t)\|\leq (\|K_0\|+\|K\dn(-\ln T)\|)(\|y(t)\|+\|e^{A\tau}x_0\|), \forall t\geq 0$ then, taking into account $\phi=\zero$, by the Artstein's transformation, we derive 
\begin{equation}
\|x(t)\|\leq \|e^{-A\tau}\| \left(\|y(t)\|+ C_1\|e^{A\tau}x_0\|+C_2\sup_{\sigma\in [0,t]}\|y(\sigma)\|\right), \quad \forall t\in [0,\tau].
\end{equation}
for some $C_1,C_2>0$.
Therefore, using \eqref{eq:ISS_y} we derive that 
the ISS-like estimate \eqref{eq:ISS_like_estimate_delay} holds for $t\geq 0$ under a properly defined functions $\beta\in \mathcal{KL}$ and $\gamma_1,\gamma_2\in \mathcal{K}$.
\end{proof}

The matched perturbation $q_2=B\gamma$ becomes mismatched for the predictor system \eqref{eq:system_y_pert}, so it cannot be completely rejected as it was done in the delay-free case. Therefore, in the input delay case, the robustness of the prescribed-time stabilizers  with respect to additive perturbations is proven only in the ISS-like sense. Notice that, the gains $\|x_0\|\cdot \gamma_i(\|q_i\|/\|x_0\|), i=1,2$ tend to some linear functions of $\|q_i\|$  as $\|x_0\|\to 0$ similarly to the delay free case (see, the formula \eqref{eq:limit_gains}). However, these gains may depend now on both the parameter $T>0$, which prescribes the regulation time, and  the input delay $\tau>0$ of the system.  Notice that  Remark \ref{rem:hatx_0}
is valid for the input delay case and the estimate \eqref{eq:ISS_like_estimate_delay}.
}

\section{Numerical Example}
\subsection{Prescribed-time stabilization of the harmonic oscillator in the delay-free case}
As an example, let us design the prescribed-time stabilizer for the harmonic oscillator in the delay-free ($\tau=0$) case
\begin{equation}
A=\left(
\begin{array}{cc}
	0 & 1\\
	-1  & 0
\end{array}
\right), \quad B=\left(
\begin{array}{cc}
	0\\
	1&
\end{array}
\right)
\end{equation}
The parameters of the prescribed-time stabilizer \eqref{eq:PT_hom_control_robust} are designed according to Theorem \ref{thm:PT_hom_control}:
\begin{equation}
K_0\!=\!(1 \;\; 0), \quad G_{\dn}=\left(\!
\begin{array}{cc}
	2 & 0\\
	0  &1
\end{array}\!
\right), \quad K\!=\!(-5.5  \;\; -3), \quad X\!=\!\left(\!
\begin{array}{cc}
	1 & -2\\
	-2 & 5.5
\end{array}\!
\right), \quad T\!=\!1. 
\end{equation}
The simulation has been done in MATLAB using the zero-order-hold method and the consistent discretization
of the homogeneous controller \eqref{eq:PT_hom_control_robust}  realized in Homogeneous Control  Systems Toolbox\footnote{\texttt{https://gitlab.inria.fr/polyakov/hcs-toolbox-for-matlab}} for MATLAB. 
The consistent discretization (see \cite{Polyakov_etal2023:Aut}) allows the convergence rate (e.g., finite-time or fixed-time convergence) of the continuous-time control system to be preserved in the case of the sampled-time implementation of the controller. The sampling period for the simulation is $h=0.01$. The simulation results 
show the prescribed-time convergence of the closed-loop system with $T=1$. Indeed, independently of the selected initial condition (see Figures \ref{fig:nominal_02} and \ref{fig:nominal_07}) the state of the closed-loop system converges to zero with the precision of the the machine epsilon ($\approx 10^{-16}$) at the time instant $1.01$, which perfectly corresponds to the prescribed settling time $T=1$ (up to the sampling period). The simulations have been done for various initial conditions up to $\|x_0\|=10^5$. The settling time remains equal to $1$ (up to the sampling period $h$) in all simulation and various $h<T$.

\begin{figure}
	\centering
	\includegraphics[width=120mm]{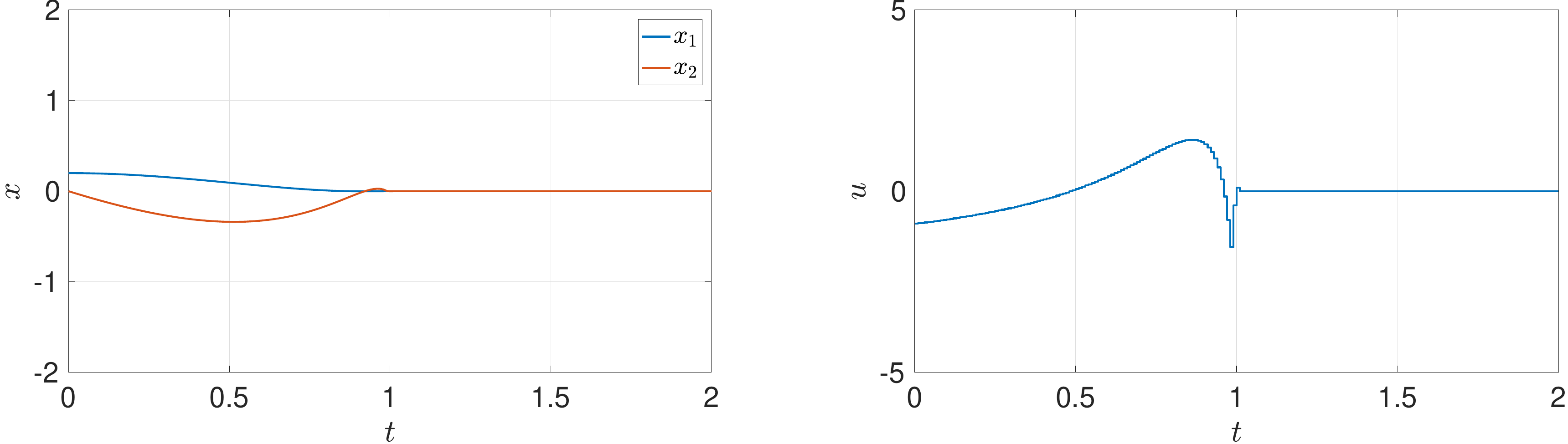}
	\caption{Stabilization at the prescribed-time $T=1$ for $x_0=(0.2 \;\; 0)^{\top}$}
	\label{fig:nominal_02}
\end{figure}

\begin{figure}
\centering
	\includegraphics[width=120mm]{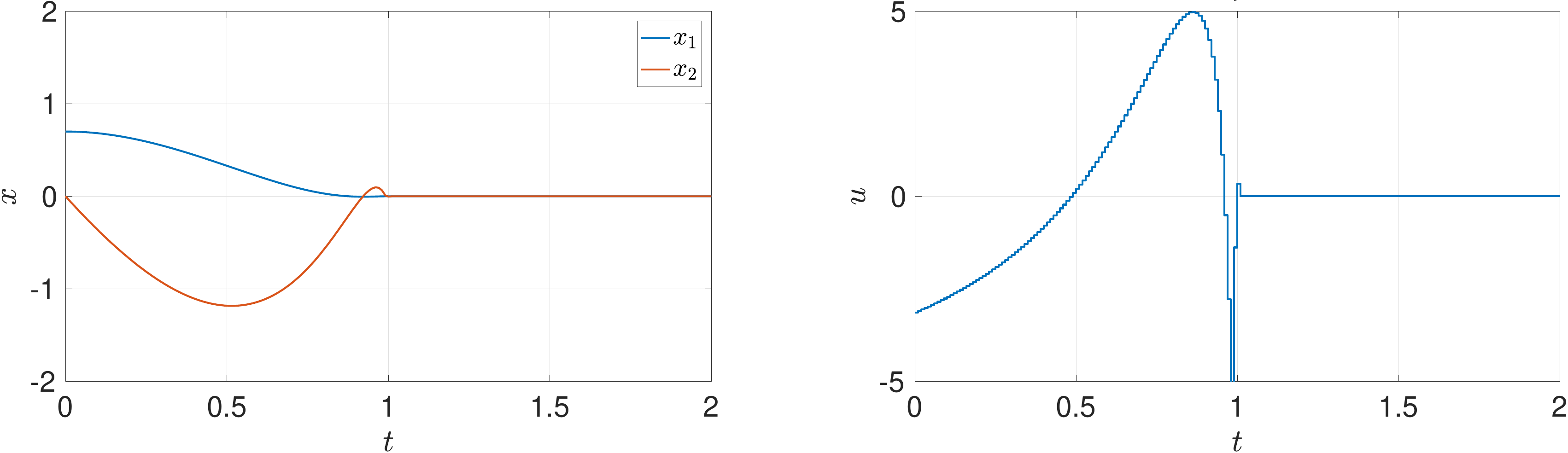}
	\caption{Stabilization at the prescribed-time $T=1$ for $x_0=(0.7 \;\; 0)^{\top}$}
	\label{fig:nominal_07}
\end{figure}

To study robustness properties of the closed-loop system, the simulations have been done, first, for the system with matched additive perturbation $B\sin(5t)$. As claimed in Theorem \ref{thm:PT_hom_control_robust}, such a perturbation cannot be rejected by the prescribed-time controller \eqref{eq:PT_hom_control_robust} if the initial state is to small (see Figure \ref{fig:pert_matched_02}). The larger initial condition, the larger matched perturbation that can be rejected (see Figure \ref{fig:pert_matched_07}). The fixed-time stabilizer \eqref{eq:FxT_hom_control} rejects the the considered matched perturbation for all initial conditions.

\begin{figure}
	\centering
	\includegraphics[width=120mm]{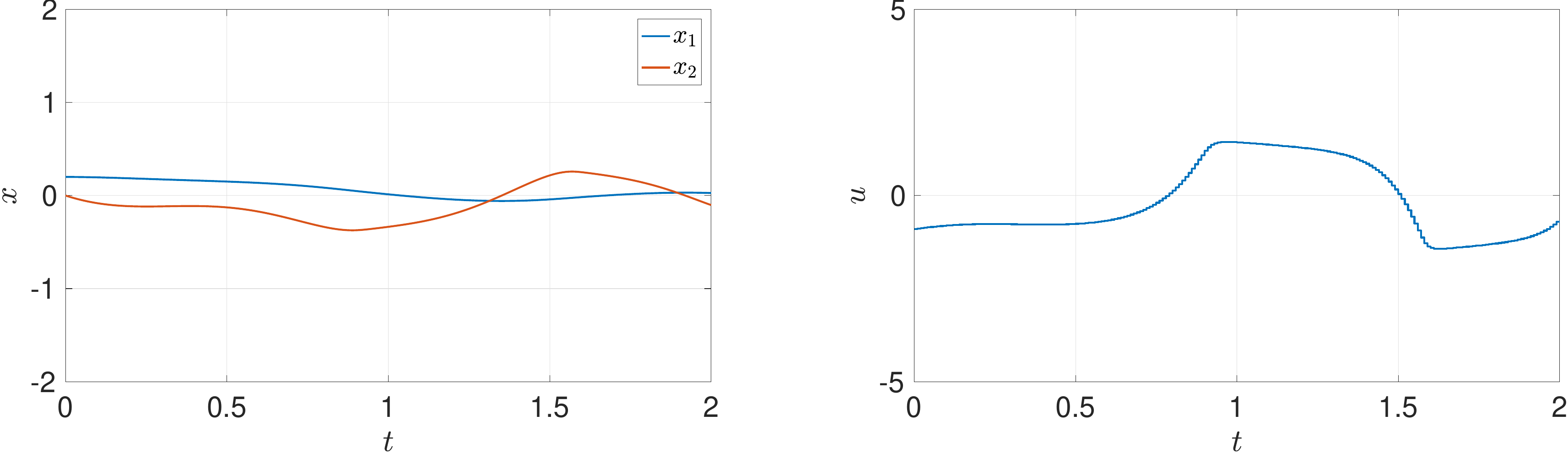}
	\caption{The case of matched  additive disturbance $B\sin(5t)$  for $x_0=(0.2 \;\; 0)^{\top}$}
	\label{fig:pert_matched_02}
\end{figure}
\begin{figure}
\centering
	\includegraphics[width=120mm]{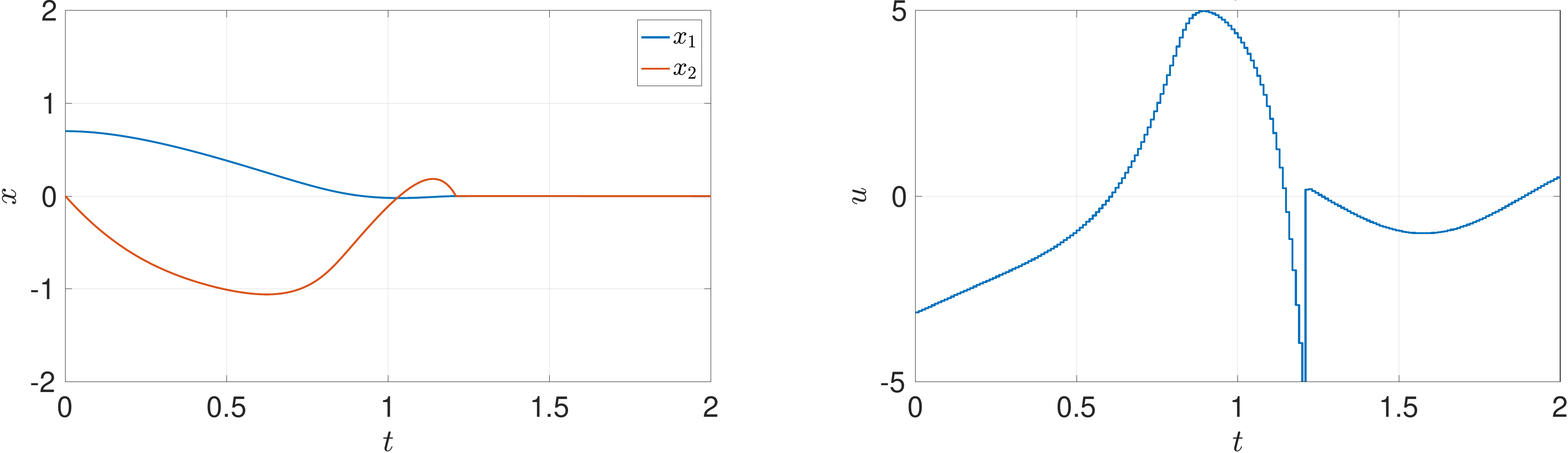}
	\caption{The case of matched additive disturbance  $B\sin(5t)$ for $x_0=(0.7 \;\; 0)^{\top}$}
	\label{fig:pert_matched_07}
\end{figure}
The ISS with respect to noised measurements is quite opposite to the case of ISS with respect to additive perturbations in the sense that the smaller initial state $x_0$, the less sensitive closed-loop system with respect to measurement noises (see Figures \ref{fig:noise_02} and \ref{fig:noise_07}). The numerical simulations for this case have been done by adding a noise $\eta$ of the magnitude $0.01$ to the state measurements $\hat x=x+\eta$. The noise is simulated by MATLAB as a uniformly distributed (pseudo-)random variable $\eta\in [-0.01,0.01]$.
\begin{figure}
	\centering
	\includegraphics[width=120mm]{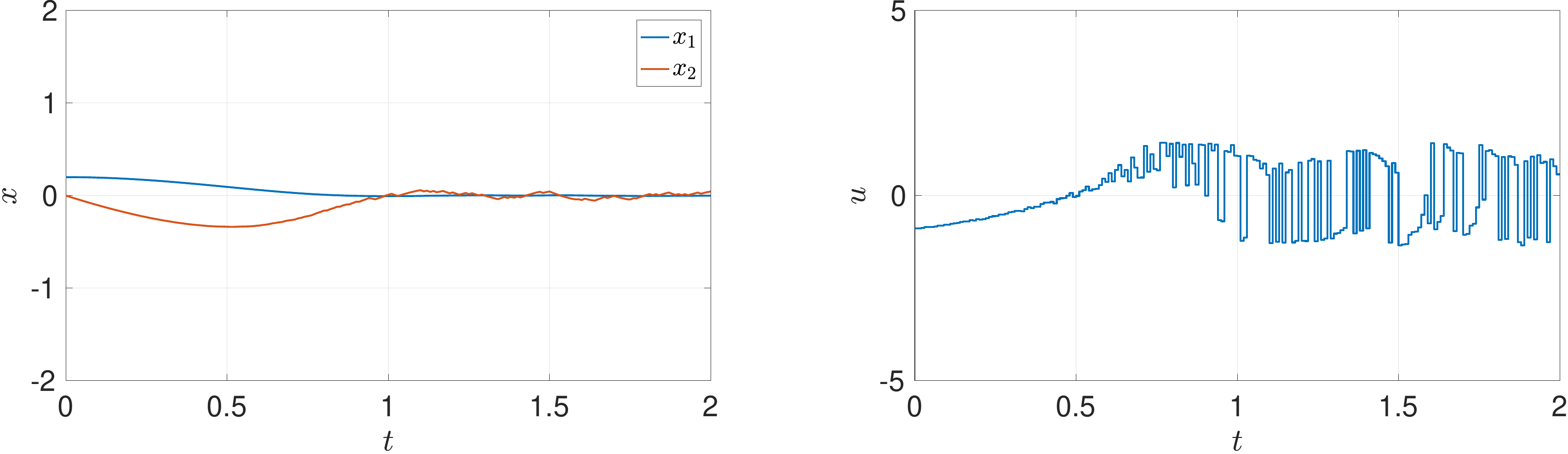}
	\caption{The case of noised measurements   for $x_0=(0.2 \;\; 0)^{\top}$}
	\label{fig:noise_02}
\end{figure}
\begin{figure}
\centering
	\includegraphics[width=120mm]{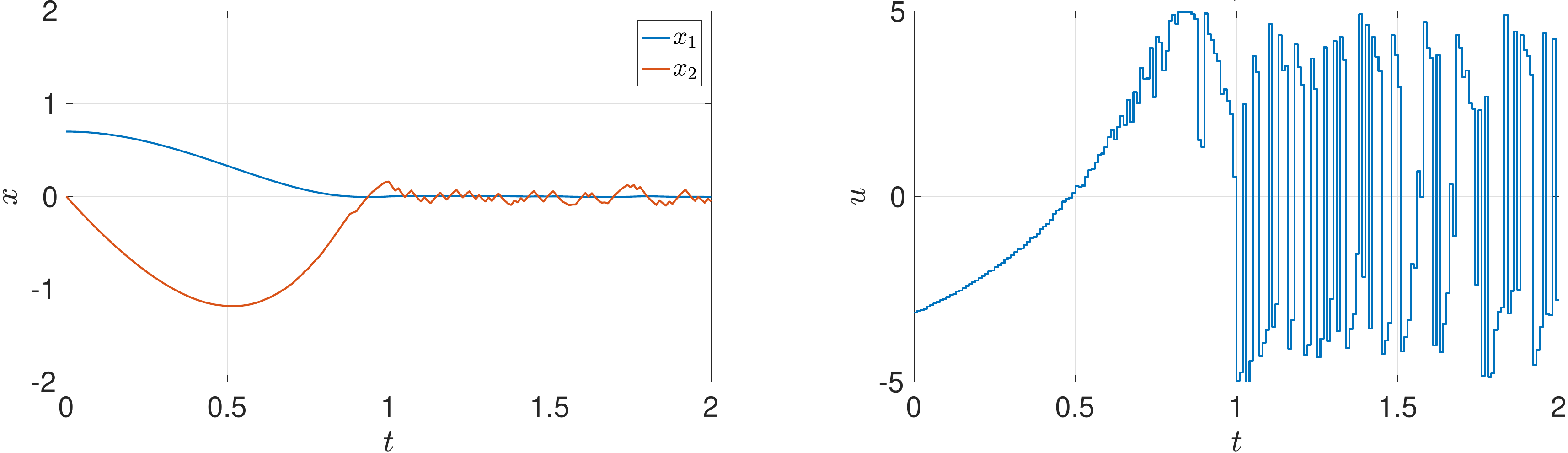}
	\caption{The case of noised measurements   for $x_0=(0.7 \;\; 0)^{\top}$}
	\label{fig:noise_07}
\end{figure}
\subsection{Prescribed-time stabilization of the input delay system}

Let the model of the controller harmonic studied above have the input delay $\tau=0.5$. In this case, the prescribed-time feedback has to be calculated using the predictor variable $y$ given by \eqref{eq:predictor}. To implement the method of consistent discretization\cite{Polyakov_etal2023:Aut} to the system \eqref{eq:system_y}, \eqref{eq:PT_hom_control_y}, the predictor variable has to be calculated exactly at the time instances
\begin{equation}
t_i=ih, \quad i=0,1,\ldots,
\end{equation}
where $h=0.01$ is the sampling period. Since the control signal is a piece-wise constant function with the sampling period $h=0.01$, the integral term in \eqref{eq:predictor} admits the following exact representation
\begin{equation}
\int^0_{-\tau}\!\!\!\!\!e^{-A\theta}u(t_i+\theta) d\theta\!=\!\!\sum_{j=1}^{N}\!\left(\!\int\limits_{jh}^{(j-1)h}\!\!\!\!\!\!e^{A\tilde \theta}d\tilde \theta\!\right)\!u(t_i-jh)\!=\!\!
\sum_{j=1}^{N}\!A^{-1}\!(e^{jhA}-e^{(j-1)hA}) u(t_i-jh),
\end{equation}
where  $N=\frac{\tau}{h}=50$ and  $A^{-1}=-A$ (for our model of the harmonic oscillator). Let the control for the predictor equation \eqref{eq:system_y} be designed
as for the delay-free system considered above. 
Due to the input delay the control signal generated at the time $t$ impacts on the system at the time instant $t+\tau$. The control signal as well as the predictor variable converge to a steady state (e.g., to zero) at the prescribed-time $T=1$, but the expected settling time of the system is $T+\tau=1.5$. The numerical simulations show this prescribed converge time (see, Figures \ref{fig:delay_nominal_02} and \ref{fig:delay_nominal_07}) for the closed-loop system. 

\begin{figure}
	\centering
	\includegraphics[width=120mm]{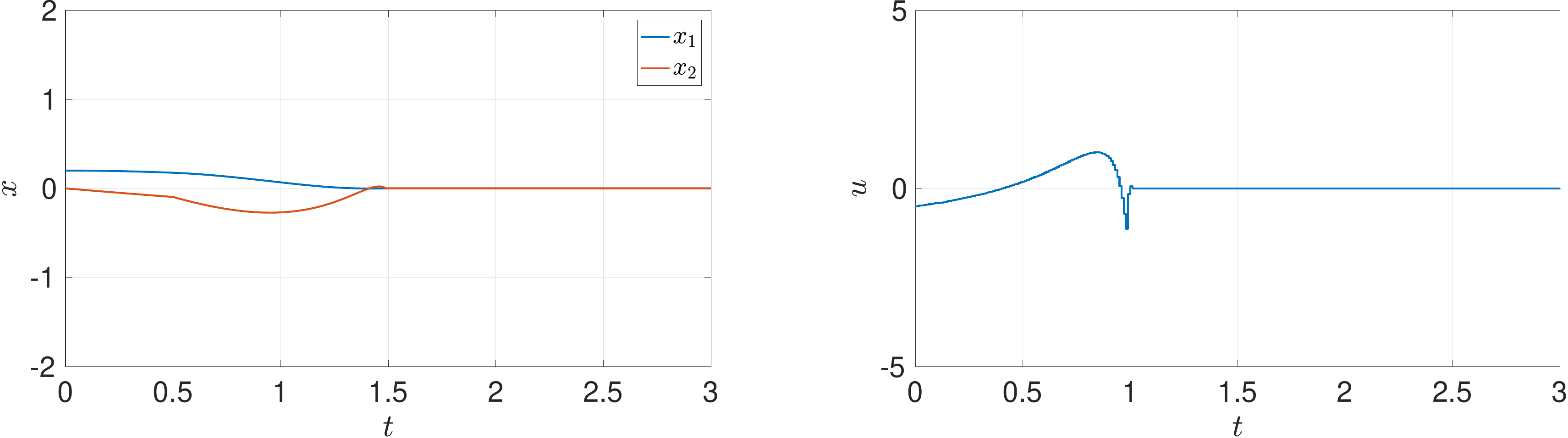}
	\caption{Stabilization at the prescribed-time $T=1.5$ for $x_0=(0.2 \;\; 0)^{\top}$ and the input delay $\tau=0.5$}
	\label{fig:delay_nominal_02}
\end{figure}

\begin{figure}
\centering
	\includegraphics[width=120mm]{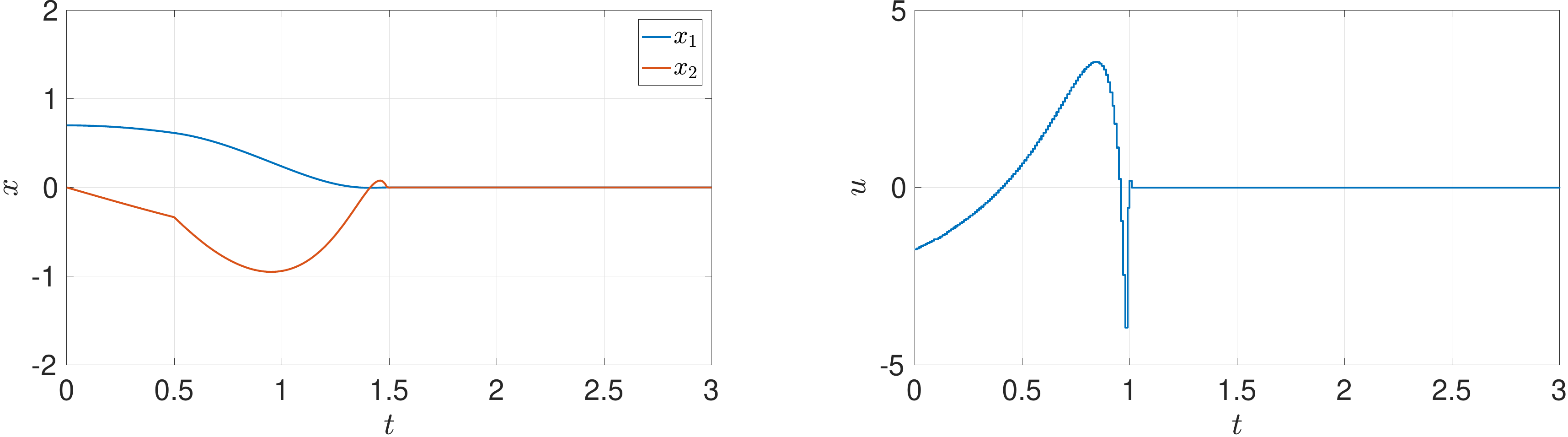}
	\caption{Stabilization at the prescribed-time $T=1.5$ for $x_0=(0.7 \;\; 0)^{\top}$ and the input delay $\tau=0.5$}
	\label{fig:delay_nominal_07}
\end{figure}

Notice that the matched additive perturbation $B\sin(5t)$ of the original system becomes the mismatched additive perturbation $e^{A\tau}B\sin(5t)$ in the predictor equation 
\eqref{eq:system_y_pert}. So, this perturbation cannot be rejected by the predictor-based  stabilizer and just ISS with respect the additive perturbations can be guaranteed (see Figure \ref{fig:delay_pert_matched_02} and \ref{fig:delay_pert_matched_07}). The conclusions about sensitivity with respect to measurement noises are the same as in the delay free case. 

\begin{figure}
	\centering
	\includegraphics[width=120mm]{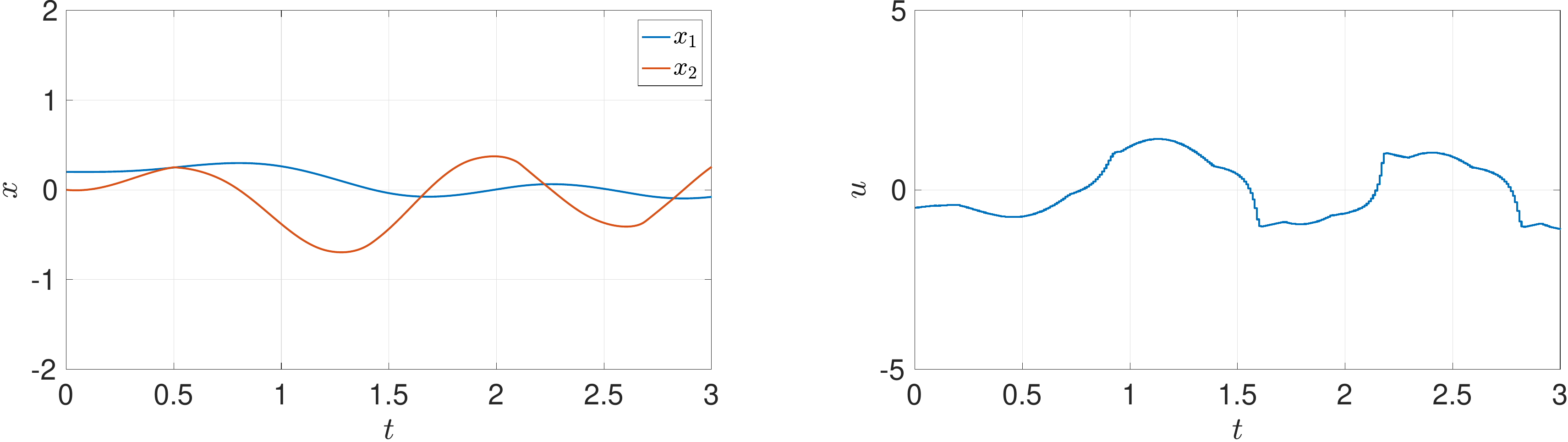}
	\caption{The case of matched  additive disturbance  $B\sin(5t)$ for $x_0=(0.2 \;\; 0)^{\top}$ and the input delay $\tau=0.5$}
	\label{fig:delay_pert_matched_02}
\end{figure}
\begin{figure}
\centering
	\includegraphics[width=120mm]{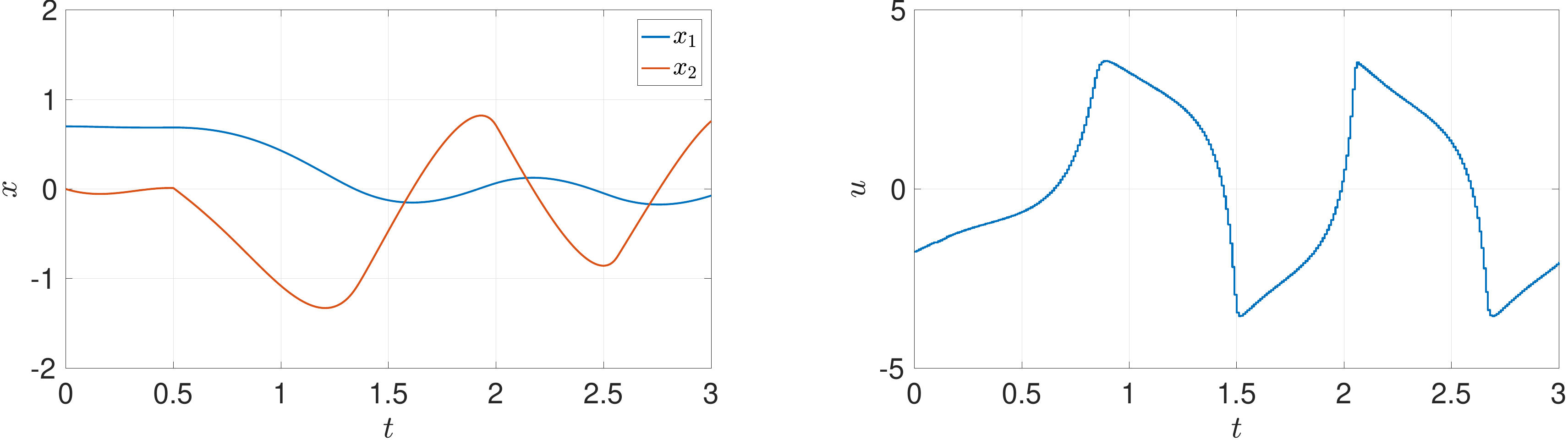}
	\caption{The case of matched additive disturbance  $B\sin(5t)$ for $x_0=(0.7 \;\; 0)^{\top}$ and the input delay $\tau=0.5$}
	\label{fig:delay_pert_matched_07}
\end{figure}



\section{Conclusions}
In this paper, new  fixed-time stabilizers are designed for LTI systems. The key feature of the new stabilizers is the dependence of the feedback gain on the initial condition. This allows the settling time of the closed-loop to have a prescribed constant settling time for all non-zero initial conditions. The obtained stabilizer does not have a time varying gain which tends to infinity as time tends to the settling time. The latter essentially improves the robustness properties of the closed-loop system with respect to measurement noises comparing to well-known time-varying prescribed-time stabilizers (like \cite{Song_etal2017:Aut}. The control laws are designed for both delay-free and input-delay cases. The theoretical results are illustrated by numerical simulations.

\bibliographystyle{plain}
\bibliography{bib_all.bib}

\end{document}